\documentclass[11pt]{article}

\usepackage[margin=1in]{geometry}
\usepackage{amsmath,amssymb,amsfonts}
\usepackage{amsthm}
\usepackage{graphicx}
\usepackage{authblk}
\usepackage{xcolor}
\usepackage{placeins}
\usepackage{adjustbox}
\usepackage{authblk}
\usepackage[misc,geometry]{ifsym}
\usepackage{cite}
\usepackage{algorithm}
\usepackage[noend]{algpseudocode}

\usepackage{tikz}
\usetikzlibrary{arrows.meta,calc}

\newtheorem{theorem}{Theorem}[section]
\newtheorem{lemma}[theorem]{Theorem}
\newtheorem{observation}[theorem]{Observation}

\theoremstyle{remark}
\newtheorem{remark}[theorem]{Remark}

\newcommand{\N}{\mathbb{N}}
\newcommand{\Q}{\mathbb{Q}}
\newcommand{\R}{\mathbb{R}}

\newcommand{\cyl}[1]{[\![#1]\!]}

\DeclareMathOperator{\ent}{H}
\DeclareMathOperator{\adim}{adim}
\DeclareMathOperator{\aDim}{aDim}
\DeclareMathOperator{\odim}{odim}
\DeclareMathOperator{\oDim}{oDim}

\DeclareMathOperator{\mh}{MH}
\DeclareMathOperator{\Bern}{Bern}

\newcommand{\odimFS}[1]{\odim_{\textup{FS}}^{#1}}
\newcommand{\oDimFS}[1]{\oDim_{\textup{FS}}^{#1}}
\newcommand{\adimFS}[1]{\adim_{\textup{FS}}^{#1}}
\newcommand{\aDimFS}[1]{\aDim_{\textup{FS}}^{#1}}

\title{One Adaptive Trailing Head Can Outperform Many Oblivious Trailing Heads}

\author{Julianne Cruz}
\author{Sho Glashausser}
\author{Neil Lutz}

\affil{Swarthmore College}

\date{}
    
\begin{document}
\maketitle

\begin{abstract} 
    In the setting of multi-head finite-state dimensions, \emph{trailing heads} lag behind a \emph{leading head}, accessing past data to aid a finite-state gambler placing bets on successive bits read by the leading head. Cruz, Glashausser, Li, and Lutz (2026) proved that, for any fixed number of trailing heads, adaptive (data-dependent) movement rules can strictly outperform oblivious (data-independent) movement schedules. In this paper we strengthen that separation by proving that a \emph{single} trailing head with adaptive movements can outperform, by a large and uniform margin, arbitrarily many trailing heads with oblivious movements. Formally, our main theorem states that there is a binary sequence whose adaptive two-head finite-state strong dimension is less than its oblivious multi-head finite-state dimension, and that the gap is greater than 0.3.
\end{abstract}

\section{Introduction}

Multi-head finite-state dimensions quantify the predictability of an infinite sequence to a finite-state gambler that is granted restricted access to past data. These dimensions were introduced by Huang, Li, Lutz, and Lutz~\cite{mhfsd} as modest generalizations of finite-state dimension, which was introduced by Dai, Lathrop, Lutz, and Mayordomo~\cite{DLLM04} and is closely connected to compression, entropy rates, and normality~\cite{BoHiVi2005,AHLM07}. Unlike finite-state dimension, multi-head finite-state dimension can capture certain long-range dependencies, such as in a sequence $S$ that satisfies $S[2n]=S[n]$ for all natural numbers $n$.

The prediction is performed by a multi-head finite-state gambler that has both a \emph{leading head}, which moves continually forward through the sequence, and a set of \emph{trailing heads}, each of which can start and stop moving forward. This allows the trailing heads to lag behind the leading head and access data it has read in the distant past. The gambler has finite memory, but it updates its state based on the observations of all heads. Huang et al.~\cite{mhfsd} showed that each additional trailing head strictly increases the gambler's predictive power, and Lutz~\cite{mhfsc} showed that the same model can be characterized in terms of data compression.

In the original setting of Huang et al.~\cite{mhfsd}, the movement of the heads is \emph{oblivious} or \emph{data-independent}, meaning that the heads follow the same schedule of movement on every sequence. Cruz, Glashausser, Li, and Lutz~\cite{CGLL26} studied an \emph{adaptive} or \emph{data-dependent} version, where the gambler can use the sequence to determine which trailing heads to move forward in each step. Cruz et al.~\cite{CGLL26} showed that adaptivity strictly increases the gambler's predictive power compared to obliviousness.

Specifically, Cruz et al.~\cite{CGLL26} proved that for each $h\geq 2$, there is a sequence $X$ whose adaptive $h$-head finite-state predimension is strictly less than its oblivious $h$-head finite-state predimension, and whose $h$-head finite-state strong predimension is strictly less than its oblivious $h$-head finite-state strong predimension. Using their notation,
\begin{equation}\label{eq:cgllsep}
	\adimFS{(h)}(X)<\odimFS{(h)}(X)\text{ and }\aDimFS{(h)}(X)<\oDimFS{(h)}(X).
\end{equation}
Roughly, $\adimFS{(h)}(X)$ and $\odimFS{(h)}(X)$ quantify the infinitely-often predictability of the sequence, while the \emph{strong} variants $\aDimFS{(h)}(X)$ and $\oDimFS{(h)}(X)$ quantify its almost-everywhere predictability. Directly analogous to the hierarchy Huang et al.~\cite{mhfsd} proved in the oblivious setting, Cruz et al.~\cite{CGLL26} further proved that, for each $h\geq 2$, there is a binary sequence $Y$ such that
\begin{equation}\label{eq:cgllhier}
	\adimFS{(h)}(Y)\leq\aDimFS{(h)}(Y)<\adimFS{(h-1)}(Y)\leq\aDimFS{(h-1)}(Y).
\end{equation}

Combined, these results describe the situation one might expect: more heads help, and adaptivity helps. In this work, we consider the tradeoffs between these two types of enhancement, and we establish a much stronger separation between adaptivity and obliviousness. Namely, we prove as our main theorem that there is a binary sequence $S$ such that for all $h\geq 1$,
\begin{equation}\label{eq:cglsep}
	\adimFS{(2)}(S)\leq\aDimFS{(2)}(S)<\odimFS{(h)}(S)-0.3\leq\oDimFS{(h)}(S)-0.3.
\end{equation}
This strengthens the separations~\eqref{eq:cgllsep} from~\cite{CGLL26} in three ways. First, our separation is quantitatively much larger and uniform in $h$; the gaps in~\eqref{eq:cgllsep} proved in~\cite{CGLL26} depend on $h$ and on the size of the sequence's alphabet, but they are all less than $\frac{3}{160}=0.01875$ and converge to 0 as $h$ approaches $\infty$, decreasing on the order of $h^{-4}$. Second, we separate adaptive strong predimension from oblivious ordinary predimension. Third, and most importantly, in our separation the number of heads on the adaptive side is fixed at two. This means that an adaptive two-head finite state gambler---which has only a single trailing head with adaptive movement rules---is able to predict $S$ more successfully than \emph{any} oblivious multi-head finite-state gambler could.

We prove the separation~\ref{eq:cglsep} using a sequence $S$ that references itself and also describes its own self-referential structure. This allows an adaptive gambler to optimally position its trailing head and thereby predict every odd-indexed bit with certainty. We use a Kolmogorov complexity argument, combined with the observation that oblivious trailing heads must move at asymptotically rational speeds, to bound the performance of oblivious multi-head finite-state gamblers on the same sequence.

As we describe in Section~\ref{sec:main}, it follows from results of Huang et al.~\cite{mhfsd} and Cruz et al.~\cite{CGLL26} that for all $h\geq 2$ there is a sequence $Y$ such that $\oDimFS{(h)}(Y)<\adimFS{(h-1)}(Y)$. Combined with~\eqref{eq:cglsep}, this means the usefulness of adaptive trailing head movement is \emph{incomparable} to the usefulness of additional trailing heads; which of these enhancements adds more predictive power depends on the type of sequence being predicted. This suggests a rich structure of tradeoffs for multi-head finite-state gamblers, analogous to the nuances of the varieties of multi-head automata~\cite{HKM09,DuKrPa2020}, and we expect future work to further explore these tradeoffs.

The rest of the paper is organized as follows. In Section~\ref{sec:prelim} we briefly overview information-theoretic preliminaries, including a discussion of Kolmogorov complexity with respect to a given measure, and in Section~\ref{sec:mhfsd} we recap the definitions of oblivious and adaptive multi-head finite-state predimensions and dimensions introduced by~\cite{mhfsd} and~\cite{CGLL26}. In Section~\ref{sec:seq}, we describe the family of sequences we will use to prove our main theorem. We prove the adaptive upper bound (Theorem~\ref{thm:ub}) and oblivious lower bound (Theorem~\ref{thm:lb}) in Sections~\ref{sec:ub} and~\ref{sec:lb}, respectively, before presenting and discussing our main theorem (Theorem~\ref{thm:main}) in Section~\ref{sec:main}.

\section{Algorithmic Information Preliminaries}\label{sec:prelim}

For $m,n\in\N$, we write $[m:n]$ for the integer interval $\{m,\ldots,n-1\}$; if $m\geq n$, then this interval is empty. The space of all infinite binary sequences is $\{0,1\}^\omega$. Given any sequence $X\in\{0,1\}^\omega$ and $A\subseteq\N$, we write $X[A]$ for the sequence or string containing the bits of $X$ whose indices are in $A$, concatenated in increasing order of index. For $m,n\in\N$, we write $X[m:n]$ as shorthand for $X[[m:n]]$, so $X[m:n]=X[m]\ldots X[n-1]$; if $m\geq n$, then this is the empty sequence, denoted $\lambda$. More generally, for $A\subseteq\R$, we define $X[A]=X[A\cap\N]$.

A finite string $w\in\{0,1\}^*$ is a \emph{prefix} of a sequence $x\in\{0,1\}^\omega$, and write $w\sqsubseteq X$, if there is some sequence $Y\in\{0,1\}^\omega$ such that $X=wY$. For each finite string $w\in\{0,1\}^*$, the \emph{cylinder} of $w$ is $\cyl{w}$, the set of all binary sequences that have $w$ as a prefix:
\[\cyl{w}=\left\{X\in\{0,1\}^\omega:w\sqsubseteq X\right\}.\]

\subsection{Bernoulli Measures and Binary Entropy}
For each $p\in[0,1]$, define the \emph{$p$-biased Bernoulli distribution}
\[\Bern_p:\{0,1\}\to[0,1]\]
by $\Bern_p(1)=p$ and $\Bern_p(0)=1-p$, and define the \emph{$p$-biased Bernoulli measure} $\mu_p:\{0,1\}^\omega\to[0,1]$ by, for all strings $w\in\{0,1\}^*$,
\[\mu_p(\cyl{w})=p^{\#_1(w)}(1-p)^{\#_0(w)},\]
where $\#_1(w)$ and $\#_0(w)$ denote the number of ones in $w$ and number of zeros in $w$, respectively. Thus, $\mu_p(\cyl{w})$ is exactly the probability that a sequence of $p$-biased Bernoulli trials will begin with $w$.

The \emph{binary entropy} of $p\in[0,1]$ is the entropy of the Bernoulli distribution with success probability $p$, denoted
\begin{equation}\label{eq:H}
    \ent(p)=p\log\left(\frac{1}{p}\right)+(1-p)\log\left(\frac{1}{1-p}\right),
\end{equation}
where the logarithms, like all others in this paper, are base-2.

\subsection{$s$-Gales and Martingales}

For $s\in[0,\infty)$, an \emph{$s$-gale} on binary strings is a function $d:\{0,1\}^*\to[0,\infty)$ that satisfies, for all $w\in\{0,1\}^*$,
\[d(w)=\frac{d(w0)+d(w1)}{2^s}.\]
Informally, an $s$-gale represents the capital of a gambler betting on successive bits of some sequence $S\in\{0,1\}^\omega$, where the parameter $s$ quantifies the favorability of the betting environment. After betting on the bits in some prefix $S[0:n]$, the gambler has capital $d(S[0:n])$. The gambler then places bets on the next bit $S[n]$, allocating fraction $\beta$ of its current capital to the event $S[n]=1$ and the remaining $(1-\beta)$ fraction of its capital to the event $S[n]=0$. If $S[n]=1$, then the gambler's capital is updated to
\[d_G(S[0:n+1])=2^s\beta d_G(S[0:n]);\]
otherwise, $S[n]=0$ and its capital is updated to
\[d_G(S[0:n+1])=2^s(1-\beta) d_G(S[0:n]).\]
A \emph{martingale} is a 1-gale.

An $s$-gale $d$ \emph{succeeds} on a sequence $S\in\{0,1\}^\omega$ if
\[\limsup_{n\to\infty}d(S[0:n])=\infty,\]
and it \emph{strongly succeeds} on $S$ if
\[\liminf_{n\to\infty}d(S[0:n])=\infty.\]

\subsection{Kolmogorov Complexity}

Let $U$ be a fixed universal prefix-free Turing machine. For binary strings $x,y\in\{0,1\}^*$, the \emph{(prefix) conditional Kolmogorov complexity} of $x$ given $y$ is
\[K(x\mid y)=\min\{|z|:z\in\{0,1\}^*\text{ and }U(z,y)=x\}.\]
Intuitively, $z$ is interpreted as a program that outputs $x$ when it is given $y$ as an input. The \emph{(prefix) Kolmogorov complexity} of $x$ is the conditional Kolmogorov complexity of $x$ given the empty string:
\[K(x)=K(x\mid \lambda).\]
Our lower-bounding arguments in Section~\ref{sec:lb} will use basic properties of Kolmogorov complexity, as described in Chapter 3 of~\cite{LiVit19}. While many details of that chapter concern error terms that are logarithmic or sublogarithmic in $|x|$, our Kolmogorov complexity arguments are robust to logarithmic error and can therefore be presented more simply. For example, when $|x|,|y|\leq n$, we will use \emph{symmetry of information} in the simplified form
\begin{equation}\label{eq:soi}
    |K(x,y)-K(x\mid y)-K(y)|=O(\log n),
\end{equation}
in place of the more precise statement $|K(x,y)-K(x\mid y,K(y))-K(y)|=O(1)$.

\subsection{Martin--L\"of Randomness with Respect to a Measure}
Given any computable measure $\mu$, a sequence $X\in\{0,1\}^\omega$ is \emph{Martin-L\"of random with respect to $\mu$} (or \emph{$\mu$-Martin-L\"of random}) if, for all prefixes $w$ of $X$,
\begin{equation}\label{eq:random}
    K(w)\geq \log\left(\frac{1}{\mu(\cyl{w})}\right)-O(1).
\end{equation}
Here we interpret $\frac{1}{0}$ as $\infty$, so this condition cannot be satisfied when $\mu([\![w]\!]=0$. This Kolmogorov complexity characterization for arbitrary computable measures is due to G\'acs~\cite{Gacs1980}, generalizing the Levin--Schnorr theorem; see Theorem 4.11 of Reimann~\cite{Reimann2024}. The theory of Martin-L\"of randomness, especially with respect to the uniform measure, is covered in depth by Downey and Hirschfeldt~\cite{DowHir10}.

It is well-known that inequality~\eqref{eq:random} is almost tight, in the sense that, for every computable measure $\mu$ and every $w\in\{0,1\}^*$,
\begin{equation}\label{eq:maxcomp}
    K(w)\leq \log\left(\frac{1}{\mu(\cyl{w})}\right)+O(\log|w|).
\end{equation}
The uniform-measure case $K(w)\leq |w|+O(\log|w|)$ is given by Theorem 3.2.1 of~\cite{LiVit19}, and the generalization is straightforward via Shannon--Fano coding.

In particular, for each computable $p\in[0,1]$, if a sequence $R\in\{0,1\}^\omega$ is Martin-L\"of random with respect to the $p$-biased Bernoulli measure $\mu_p$ and $w$ is a substring of $R$ (meaning $w=R[m:n]$ for some $m,n\in\N$ with $m<n$), then
\begin{equation}\label{eq:bernoullirandom}
    \left|K(w)-\left(\#_1(w)\log\left(\frac{1}{p}\right)+\#_0(w)\log\left(\frac{1}{1-p}\right)\right)\right|=O(\log|w|).
\end{equation}
Intuitively, such a sequence $R$ is a ``typical'' outcome of an infinite sequence of $p$-biased Bernoulli trials.

It will be important for our lower bound in Section~\ref{sec:lb} that the density of ones in any prefix of a $\mu_p$-Martin-L\"of random binary sequence is close to $p$, which we quantify in the following lemma.
\begin{lemma}\label{lem:ones}
    If $p\in(0,1)$ is computable and $R\in\{0,1\}^\omega$ is a $\mu_p$-Martin-L\"of random sequence, then for every prefix $w$ of $R$,
    \[\big|\#_1(w)-p|w|\big|=O\big(\sqrt{|w|\log |w|}\big).\]
\end{lemma}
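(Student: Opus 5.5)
We argue by contradiction, comparing the lower bound~\eqref{eq:random} with an upper bound obtained by describing $w$ through its number of ones. Fix a prefix $w$ of $R$ with $n=|w|$ and $k=\#_1(w)$. Since $R$ is $\mu_p$-Martin-L\"of random, \eqref{eq:random} gives
\[K(w)\geq k\log\frac{1}{p}+(n-k)\log\frac{1}{1-p}-O(1).\]
For the upper bound we describe $w$ by first specifying $n$ and $k$, which costs $O(\log n)$ bits, and then specifying the index of $w$ within the lexicographic list of all length-$n$ strings with exactly $k$ ones. This yields
\[K(w)\leq \log\binom{n}{k}+O(\log n).\]

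Next we convert the binomial coefficient into entropy form. Write $q=k/n$. Stirling's approximation, or simply the standard bound $\binom{n}{k}\leq 2^{n\ent(q)}$, gives $\log\binom{n}{k}\leq n\ent(q)$. On the other side, the lower-bound expression equals $n\big(q\log\frac1p+(1-q)\log\frac1{1-p}\big)$. That quantity is $n\ent(q)+nD(q\|p)$, where $D(q\|p)$ is the binary Kullback--Leibler divergence. Combining the two bounds on $K(w)$ therefore yields
\[nD(q\|p)\leq O(\log n).\]

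The remaining step is to turn this divergence bound into the claimed deviation bound. By Pinsker's inequality, $D(q\|p)\geq \frac{2}{\ln 2}(q-p)^2$ in bits. Substituting gives $n(q-p)^2=O(\log n)$, so $|q-p|=O(\sqrt{\log n/n})$. Multiplying by $n$ gives $|k-pn|=O(\sqrt{n\log n})$, which is the statement. Pinsker's inequality needs no assumption on $p$ beyond $p\in(0,1)$, and $p$ is fixed, so all the constants are uniform in $w$.

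The main obstacle is bookkeeping: we must make sure the constants hidden in the $O(\cdot)$ terms do not depend on $w$. The $O(1)$ term from~\eqref{eq:random} depends only on $R$ and $p$. The encoding of $(n,k)$ costs $2\log n+O(\log\log n)$ bits via a prefix-free code, and the enumeration of strings with $k$ ones is uniform. We also need the upper bound $\log\binom nk\leq n\ent(k/n)$, which holds exactly, so no Stirling error terms are needed. Computability of $p$ is used only through~\eqref{eq:random}, and it is not needed in the upper bound, since that bound does not reference $p$.
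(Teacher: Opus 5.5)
Your proof is correct, but it takes a different route from the paper's. The paper uses a tail-bound argument. A Chernoff bound shows that the set $B_n$ of length-$n$ strings whose number of ones deviates from $pn$ by more than $3\sqrt{pn\log n/\log e}$ has $\mu_p$-measure at most $2^{1-3\log n}$. A Shannon--Fano code for $\mu_p$ restricted to $B_n$, computable from $n$, then describes any $w\in B_n$ in at most $\log(1/\mu_p(\cyl{w}))-3\log n+2$ bits. Even after paying $K(n)\leq 2\log n+O(1)$, this undercuts~\eqref{eq:random} by roughly $\log n$, so no long prefix of $R$ lies in $B_n$.

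You replace the probabilistic estimate with a combinatorial one. You use an enumerative code for $w$ among strings with $k$ ones, the exact bound $\binom{n}{k}\leq 2^{n\ent(k/n)}$, the identity $q\log\frac{1}{p}+(1-q)\log\frac{1}{1-p}=\ent(q)+D(q\,\|\,p)$, and Pinsker's inequality. Despite your first sentence, the argument is direct rather than by contradiction.

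Your route buys several things:
\begin{itemize}
\item no tail bound is needed;
\item the code does not involve $p$, so computability of $p$ enters only through~\eqref{eq:random}, whereas the paper's Shannon--Fano code must compute $\mu_p$;
\item you get the stronger intermediate statement $nD(k/n\,\|\,p)=O(\log n)$ for every prefix.
\end{itemize}
The paper's route, in turn, is a generic typicality template. The same Shannon--Fano step shows that a $\mu_p$-random sequence eventually avoids any family of events computable from $n$ with $\mu_p$-measure $O(n^{-c})$, for a large enough constant $c$. So it adapts to statistics other than the count of ones, and it directly yields the $p$-dependent constant $3\sqrt{p/\log e}$.
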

The $p=1/2$ case is immediate from Lemma 2.6.1 in the book of Li and Vit\'anyi~\cite{LiVit19}, using the deficiency function $\delta(n)=2\log n$. That result originally appeared in those authors' earlier paper~\cite{LiVit94}, generalizing Vovk~\cite{Vovk87}. The argument for arbitrary computable $p\in[0,1]$ is similar; we include it here for completeness.
\begin{proof}
    Let $p\in (0,1)$ be computable. For each $n\in\N$, let
    \[B_n=\left\{x\in\{0,1\}^n:\big|\#_1(x)-pn\big|>3\sqrt{\frac{pn\log n}{\log e}}\right\}.\]
    Letting $\cyl{B_n}$ denote $\bigcup_{x\in B_n}\cyl{x}$, a standard Chernoff bound gives
    \begin{align*}
        \mu_p(\cyl{B_n})&\leq2\exp\left(-\frac{pn}{3}\left(\frac{3}{pn}\sqrt{\frac{pn\log n}{\log e}}\right)^2\right)\\
        &=2^{1-3\log n}.
    \end{align*}
    
    Let $R\in\{0,1\}^\omega$ be a sequence, suppose there is some $n\in\N$ such that  $R[0:n]\in B_n$, and let $w=R[0:n]$. We can specify $w$ with a Shannon--Fano code (see Section 5.4 of~\cite{CovTho06}) for $\mu_p$ restricted to $B_n$. This code can be computed given $n$, and $w$ will have codeword length
    \[\left\lceil\log\left(\frac{\mu_p(\cyl{B_n})}{\mu_p(\cyl{w})}\right)\right\rceil\leq 2-3\log (n)+\log\left(\frac{1}{\mu_p(\cyl{w})}\right).\]
    Therefore,
    \begin{align*}
        K(w)&\leq K(n)+2-3\log(n)+\log\left(\frac{1}{\mu_p(\cyl{w})}\right)\\
        &\leq O(1)-\log(n)+\log\left(\frac{1}{\mu_p(\cyl{w})}\right),
    \end{align*}
    because $K(n)\leq \log(n)+2\log\log(n)+O(1)\leq 2\log(n)+O(1)$; see Proposition 3.6.3 of~\cite{DowHir10}. This violates~\eqref{eq:random} whenever $n$ is sufficiently large. As the lemma statement holds trivially for bounded $|w|$, this completes the proof.
\end{proof}

By combining Lemma~\ref{lem:ones} with~\eqref{eq:bernoullirandom}, we have that if $R\in\{0,1\}^\omega$ is $\mu_p$-Martin-L\"of random and $w$ is a substring of $R$, then
\begin{equation}\label{eq:bernoullirandoment}
    \big|K(w)-\ent(p)|w|\big|=O\big(\sqrt{|w|\log|w|}\big).
\end{equation}

\section{Multi-Head Finite-State Dimensions}\label{sec:mhfsd}

We now recall the definitions of multi-head finite-state gamblers, predimensions, and dimensions. The oblivious variants were introduced first, by Huang et al.~\cite{mhfsd}, but here we treat oblivious gamblers as a special case of the more general adaptive gamblers studied by Cruz et al.~\cite{CGLL26}. 

\subsection{Multi-Head Finite-State Gamblers}
For $h\geq 1$, an \emph{(adaptive) $h$-head finite-state gambler} or \emph{$h$-FSG} is a 6-tuple
\[G=(Q,\Sigma,\delta,\beta,q_0,c_0),\]
where $Q$ is the finite state space, $\Sigma$ is a finite alphabet of size $\geq 2$,
\[\delta:Q\times\Sigma^h\to Q\times\{0,1\}^{h-1}\]
is the transition function, $\beta:Q\to\Delta_\Q(\Sigma)$ is the betting function, $q_0\in Q$ is the initial state, and $c_0\in[0,\infty)$ is the initial capital. Here $\Delta_\Q(\Sigma)$ denotes the class of all rational-valued discrete probability distributions over $\Sigma$. For the remainder of this paper, the alphabet $\Sigma$ will always be $\{0,1\}$.

The gambler has $h$ \emph{heads}, numbered $1,\ldots,h$. Head $h$ is the \emph{leading head}, and the others are \emph{trailing heads}. Given a sequence $S\in\{0,1\}^\omega$, the gambler proceeds in discrete time steps. At each step $n\in\N$, each head $i$ has position $\pi_i(S[0:n])$, and the state is $q(S[0:n])$. Initially, the state is $q(S[0:0])=q_0$ and all heads are at position 0, meaning $\pi_i(S[0:0])=0$ for $1\leq i\leq h$.

During step $n\in\N$, in some state $q\in Q$, the gambler $G$ does the following. First, $G$ places a bet $\beta(q)$ on bit $S[n]$. Second, $G$ applies its transition function to its current state and the $h$-tuple of bits at the current positions of each head, yielding a new state $q'$ and an $(h-1)$-tuple $(r_1,\ldots,r_{h-1})\in\{0,1\}^{h-1}$:
\[\delta(q(S[0:n]),(S[\pi_1(S[0:n])],\ldots,S[\pi_h(S[0:n])]))=(q',(r_1,\ldots,r_{h-1})).\]
Third, $G$ updates its state to $q(S[0:n+1])=q'$ and, defining $r_h=1$, moves each head $i$ forward if $r_i=1$. That is, for each $i\in\{1,\ldots,h\}$,
\[\pi_i(S[0:n+1])=\pi_i(S[0:n])+r_i.\]
In particular, $\pi_h(S[0:n])=n$ holds for all $n\in\N$; the leading head moves forward in every step.

\subsection{Obliviousness and Adaptivity}
An $h$-head finite-state gambler is \emph{oblivious} or \emph{data-independent} if its trailing head movements are independent of the input sequence, meaning that for all sequences $R,S\in\{0,1\}^\omega$, all $n\in\N$, and all $i\in\{1,\ldots,h\}$, we have
\[\pi_i(R[0:n])=\pi_i(S[0:n]).\]
Formally, oblivious multi-head finite-state gamblers are a special case of adaptive $h$-head finite-state gamblers, but we primarily use the adjective \emph{adaptive} to contrast the general model with the oblivious special case.

Huang et al.~\cite{mhfsd} observed that as a consequence of oblivious, finite-state movement, each trailing head $i$ must have a fixed speed $\sigma_i\in[0,1]$ such that, whenever the position of the leading head is $n$, the position of head $i$ is within a constant of $\sigma_i$. It is implicit in the proof of this observation in~\cite{mhfsd} that $\sigma_i$ must be rational, as it is the integer number of times the head advances in a particular cycle, divided by the integer length of the cycle.
\begin{observation}[Huang et al. \cite{mhfsd}]\label{obs:speed}
    If $h\geq 2$ and $G$ is an oblivious $h$-head finite-state gambler, then there are speeds $\sigma_1,\ldots,\sigma_{h-1}\in[0,1]\cap\Q$ and a constant $c_G$ such that, for all $S\in\{0,1\}^\omega$, all $n\in\N$, and all $i\in\{1,\ldots,h-1\}$,
    \[\sigma_i n-c_G\leq \pi_i(S[0:n])\leq \sigma_i n+c_G.\]
\end{observation}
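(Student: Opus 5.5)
The plan is to exploit the fact that an oblivious gambler's head positions depend only on the step number $n$, never on $S$. Fix an oblivious $h$-FSG $G=(Q,\{0,1\},\delta,\beta,q_0,c_0)$ and an arbitrary sequence, say $0^\omega$. Because the positions are the same on every sequence, the movement bits $(r_1,\ldots,r_{h-1})$ produced at step $n$ form a fixed sequence $r^{(n)}\in\{0,1\}^{h-1}$. The first step is to show this sequence is eventually periodic. Consider the deterministic system obtained by running $G$ on $0^\omega$. All heads then read $0$, so the state evolves as $q_{n+1}=f(q_n)$ for the map $f(q)=\delta(q,0^h)_1$. Hence $(q_n)$ is eventually periodic, with preperiod $t_0\le|Q|$ and period $\ell\le|Q|$. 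The movement bits $r^{(n)}=\delta(q_n,0^h)_2$ are then eventually periodic with the same preperiod and period. Since $\pi_i(S[0:n])=\pi_i(0^\omega[0:n])$ for every $S$ by obliviousness, the positions on all sequences are determined by this single movement schedule.

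Next I would define the speeds. Let $a_i$ be the number of steps $n\in[t_0:t_0+\ell]$ with $r^{(n)}_i=1$, and set $\sigma_i=a_i/\ell\in[0,1]\cap\Q$. Write $n=t_0+k\ell+j$ with $0\le j<\ell$ and $n\ge t_0$. Then
\[
\pi_i(S[0:n])=\pi_i(S[0:t_0])+k a_i+(\text{moves in a partial period}),
\]
where the first term lies in $[0,t_0]$ and the partial-period term lies in $[0,\ell]$. On the other hand, $\sigma_i n=\sigma_i t_0+k a_i+\sigma_i j$. Subtracting gives
\[
|\pi_i(S[0:n])-\sigma_i n|\le t_0+\ell\le 2|Q|.
\]
For $n<t_0$ both quantities lie in $[0,t_0]$, so the same bound holds. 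Taking $c_G=2|Q|$ therefore works uniformly in $S$, $n$, and $i$.

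The only subtle step is the first one. In general the state sequence depends on the bits read, so it is not obviously eventually periodic. The point is that obliviousness lets me replace $S$ by the constant sequence $0^\omega$ without changing any head position. On that sequence the dynamics are autonomous on a finite set, so pigeonhole gives eventual periodicity. Rationality of $\sigma_i$ then comes for free, because $\sigma_i$ is an integer count divided by the cycle length.
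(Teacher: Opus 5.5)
Your proof is correct and follows the same idea the paper attributes to Huang et al.: the oblivious movement schedule is eventually periodic, and each $\sigma_i$ is the integer number of advances of head $i$ in one cycle divided by the cycle length. Running the gambler on $0^\omega$ to make the state dynamics autonomous is a clean way to justify that periodicity under this paper's definition of obliviousness, which only says that head positions are the same on every sequence.
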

It follows that a trailing head with speed 0 or 1 can be simulated by the leading head using finite states, so we will assume $\sigma_i\in(0,1)$ for $1\leq i\leq h-1$. The rationality of each $\sigma_i$ is a key ingredient in the current paper; our adaptive gambler's trailing head will move at an asymptotically irrational speed, thereby accessing information that is unavailable to heads moving at rational speeds.

\subsection{Varieties of Multi-Head Finite-State Dimension}
    The \emph{martingale of a multi-head finite-state gambler} is the function $d_G:\{0,1\}^*\to[0,\infty)$ defined recursively by $d_G(S[0:0])=c_0$ and, for all $n\in\N$,
    \begin{equation}\label{eq:dG}
        d_G(S[0:n+1])=2\beta(q(S[0:n]))(S[n])d_G(S[0:n])
    \end{equation}
    and the \emph{$s$-gale of $G$} is the function $d_G^{(s)}:\{0,1\}^*\to[0,\infty)$ given by
    \begin{equation}\label{eq:dGs}
        d_G^{(s)}(S[0:n])=2^{(s-1)n}d_G(S[0:n]).
    \end{equation}

    For each $h\geq 1$ and $S\in\{0,1\}^\omega$, the \emph{adaptive $h$-head finite-state predimension}, \emph{adaptive $h$-head finite-state strong predimension}, \emph{oblivious $h$-head finite-state predimension}, and \emph{oblivious $h$-head finite-state strong predimension} of $S$ are, respectively:
    \begin{align*}
        \adimFS{(h)}(S)&=\inf\{s:\exists\text{ $h$-FSG $G$ s.t. $d_G^{(s)}$ succeeds on }S\},\\
        \aDimFS{(h)}(S)&=\inf\{s:\exists\text{ $h$-FSG $G$ s.t. $d_G^{(s)}$ succeeds strongly on }S\},\\
        \odimFS{(h)}(S)&=\inf\{s:\exists\text{ oblivious $h$-FSG $G$ s.t. $d_G^{(s)}$ succeeds on }S\},\\
        \oDimFS{(h)}(S)&=\inf\{s:\exists\text{ oblivious $h$-FSG $G$ s.t. $d_G^{(s)}$ succeeds strongly on }S\}.
    \end{align*}
    These quantities directly generalize the one-head cases, which are finite-state dimension as defined by Dai et al.~\cite{DLLM04} and finite-state strong dimension as defined by Athreya, Hitchcock, Lutz, and Mayordomo~\cite{AHLM07}. Obliviousness and adaptivity do not come into play in the one-head case since these properties only concern the trailing heads.
    
    It is immediate from the above definitions that, for each $S\in\{0,1\}^\omega$,
    \begin{equation}\label{eq:comparevariants}
        \begin{aligned}
            \adimFS{(h)}(S)&\leq \aDimFS{(h)}(S)\leq \oDimFS{(h)}(S),\\
            \adimFS{(h)}(S)&\leq \odimFS{(h)}(S)\leq \oDimFS{(h)}(S),
        \end{aligned}
    \end{equation}
    and each of these quantities is non-increasing in $h$.
    
    The \emph{oblivious multi-head finite state dimension}, \emph{oblivious multi-head finite-state strong dimension}, \emph{adaptive multi-head finite state dimension}, and \emph{adaptive multi-head finite-state strong dimension} of $S$ are, respectively, the limits inferior, over all positive integers $h$, of the four quantities above.

\section{Constructing the Family of Sequences}\label{sec:seq}

Define the function $F:\{0,1\}^\omega\to\{0,1\}^\omega$ by, for all sequences $R\in\{0,1\}^\omega$, for all $n\in\N$,
\begin{equation}\label{eq:F}
    \begin{aligned}
        F(R)[2n]&=R[n]\\
        F(R)[2n+1]&=F(R)\left[2\#_1(R[0:n])\right].
    \end{aligned}
\end{equation}
Thus, at even indices, $F(R)$ takes a new bit directly from $R$, and at odd indices, it copies a bit from earlier in the sequence $F(R)$. Note that this earlier bit is always even-indexed, so the recursive structure only has depth one. The exact index of the copied bit depends on how many ones appeared in $R$ prior to $R[n]$, which is the bit most recently taken from $R$.

\begin{figure}
    \centering
    \begin{adjustbox}{width=\textwidth}
\begin{tikzpicture}[
  >=Latex,
  bit/.style={draw, minimum width=6mm, minimum height=6mm, inner sep=0pt, font=\ttfamily},
  idx/.style={font=\scriptsize},
  rarrow/.style={->, semithick},
  copyarrow/.style={->, semithick}
]

\def\Rbits{1,0,1,0,0,1,1,0,1,1,1,0,1,1,0,1,0,1,1,0}
\def\Fbits{1,1,0,0,1,0,0,1,0,1,1,1,1,0,0,0,1,0,1,1}

\foreach \i in {0,...,19}{
  \node[idx] at (0.6*\i,1.5) {\i};
}

\node[anchor=east] at (-0.8,0.8) {$R$};
\node[anchor=east] at (-0.8,-1.2) {$F(R)$};

\foreach[count=\i from 0] \b in \Rbits {
  \node[bit] (R\i) at (0.6*\i,0.8) {\b};
}

\foreach[count=\i from 0] \b in \Fbits {
  \node[bit] (F\i) at (0.6*\i,-1.2) {\b};
}

\foreach \i/\j in {0/0,1/2,2/4,3/6,4/8,5/10,6/12,7/14,8/16,9/18}{
  \draw[rarrow] (R\i.south) -- (F\j.north);
}

\draw[copyarrow] (F0.south)  to[out=-60,in=-120] (F1.south);

\draw[copyarrow] (F2.south)  to[out=-60,in=-120] (F3.south);
\draw[copyarrow] (F2.south)  to[out=-60,in=-120] (F5.south);

\draw[copyarrow] (F4.south)  to[out=-60,in=-120] (F7.south);
\draw[copyarrow] (F4.south)  to[out=-60,in=-120] (F9.south);
\draw[copyarrow] (F4.south)  to[out=-60,in=-120] (F11.south);

\draw[copyarrow] (F6.south)  to[out=-60,in=-120] (F13.south);

\draw[copyarrow] (F8.south)  to[out=-60,in=-120] (F15.south);
\draw[copyarrow] (F8.south)  to[out=-60,in=-120] (F17.south);

\draw[copyarrow] (F10.south) to[out=-60,in=-120] (F19.south);

\end{tikzpicture}
\end{adjustbox}
    \vspace{-2em}
    \caption{Example of applying $F$ to the first ten bits of a sequence $S$. Each bit of $S$ is placed at an even-indexed position in $F(R)$, and each of these bits is copied forward to one or more odd-indexed positions in $F(R)$.}
    \label{fig:example}
\end{figure}

For $p\in[0,1]$, we will apply the function $F$ to $\mu_p$-Martin-L\"of random sequences. For such a sequence $R$, Lemma~\ref{lem:ones} tells us that $\#_1(R[0:n])$ will be approximately $pn$, so $F(R)[2n+1]$ will copy a bit from an earlier in $F(R)$ that is close to index $2pn$.

\section{Upper Bound on Adaptive Two-Head Strong Predimension}\label{sec:ub}

We now prove our upper-bound on the adaptive two-head strong finite-state predimension of sequences of the form $F(R)$, where $R$ is random with respect to a Bernoulli measure $\mu_p$, by defining an adaptive two-head finite state gambler whose $s$-gale strongly succeeds on $F(R)$ for all $s>\ent(p)/2$. By always positioning its trailing head near the currently pertinent information, the gambler will predict odd-indexed bits with certainty. It will treat the even-indexed bits as $\tilde{p}$-biased Bernoulli trials, where $\tilde{p}$ is a rational number very close to $p$. This small complication is due to the restriction that the betting function of a multi-head finite-state gambler is a rational-valued distribution.

\begin{theorem}\label{thm:ub}
    If $p\in(0,1)$ and $R\in\{0,1\}^\omega$ is a $\mu_p$-Martin-L\"of random sequence, then
    \[\aDimFS{(2)}(F(R))\leq \frac{\ent(p)}{2}.\]
\end{theorem}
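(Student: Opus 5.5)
The plan is to exhibit, for each rational $\tilde p\in(0,1)$, an adaptive 2-FSG $G_{\tilde p}$ that behaves as follows. It bets $\tilde p$ on $1$ at every even index. At every odd index it bets all of its capital on the bit under its trailing head. We then show that $d^{(s)}_{G_{\tilde p}}$ succeeds strongly on $S=F(R)$ whenever
\[
s>\tfrac12\Big(p\log\tfrac1{\tilde p}+(1-p)\log\tfrac1{1-\tilde p}\Big).
\]
Letting $\tilde p\to p$ through rationals, this threshold tends to $\ent(p)/2$ by continuity, which gives the theorem.

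\textbf{The gambler.} The key invariant is that at the start of step $2n+1$ the trailing head sits at position $2\#_1(R[0:n])$. This is exactly the index that $S[2n+1]$ copies, so the head reads the correct bit. The states track the parity of the step. In addition they store the bit read by the leading head at the most recent even step, which is $R[n]=S[2n]$, and the bit read by the trailing head. The movement rule is: if $R[n]=1$, the trailing head advances by two positions during the two steps $2n+1$ and $2n+2$, one per step. Otherwise it stays put.

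The invariant then holds by induction. After handling $R[n]$, the head is at $2\#_1(R[0:n+1])$, which is where it must be for step $2n+3$. It also stays on even positions, and it never passes the leading head, since $2\#_1(R[0:n])\le 2n<2n+1$.

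\textbf{Timing.} The one subtle point is that the bet $\beta(q)$ depends only on the state at the beginning of the step. The bit read by the trailing head must therefore already be stored in the state before the bet on $S[2n+1]$ is placed. I handle this by having the trailing head reach its target one step earlier, and by recording the bit it reads into the state during the transition of step $2n$. When that transition is computed, the decision whether to move depends on $R[n]$, which the leading head is reading in that very step.

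To make this work I re-index the movement. During step $2n$ the transition sees the leading bit $R[n]$ and the trailing bit at the head's current position. The head first has to move to $2\#_1(R[0:n])$ and read that bit. Its position lags by the moves triggered by $R[n-1]$, so these moves happen during steps $2n-1$ and $2n$. The bit at the new position is then read at step $2n+1$, one step too late.

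The fix is to keep the head one step ahead. I store in the finite state the bits read at the two candidate positions $2k$ and $2k+2$, where $k=\#_1(R[0:n])$; this is a small lookahead buffer. When $R[n]$ is revealed, the state selects the correct stored bit. This finite-state bookkeeping is the main obstacle, but it is routine case analysis.

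\textbf{Capital.} Every odd-indexed bet is correct, so the capital doubles at each odd step. At even steps it is multiplied by $2\tilde p$ or $2(1-\tilde p)$. Hence
\[
\log d^{(s)}(S[0:2n])=s\cdot 2n+\log c_0-\Big(\#_1(R[0:n])\log\tfrac1{\tilde p}+\#_0(R[0:n])\log\tfrac1{1-\tilde p}\Big).
\]
By Lemma~\ref{lem:ones}, $\#_1(R[0:n])=pn+O(\sqrt{n\log n})$. The expression is therefore
\[
2n\big(s-\tfrac12(p\log\tfrac1{\tilde p}+(1-p)\log\tfrac1{1-\tilde p})\big)-O(\sqrt{n\log n}),
\]
which tends to $\infty$ when $s$ exceeds the threshold. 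Odd prefixes differ from even ones by a bounded factor, so the success is strong.
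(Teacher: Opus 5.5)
Your capital computation is correct. Handling the rational approximation through the threshold $\frac12\bigl(p\log\frac1{\tilde p}+(1-p)\log\frac1{1-\tilde p}\bigr)\to\ent(p)/2$ is tidier than the paper's explicit $\varepsilon$-bookkeeping.

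The gap is the gambler itself, which is the substance of the theorem and which you never pin down. In your Timing paragraph you correctly see that $S[2\#_1(R[0:n])]$ must already be in the state when the bet on $S[2n+1]$ is placed. Your first remedy is exactly right: have the head reach its target by the start of step $2n$ and copy the bit it reads into the state during the transition of step $2n$. You then abandon it by asserting that the moves triggered by $R[n-1]$ happen during steps $2n-1$ and $2n$. That misreads the model. At step $2n-2$ the transition already sees $S[2n-2]=R[n-1]$ under the leading head, and its movement output is applied immediately: $\pi_1(S[0:2n-1])=\pi_1(S[0:2n-2])+r_1$. So the head can advance at steps $2n-2$ and $2n-1$, with one state bit remembering $R[n-1]$ for the second move. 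By induction this gives $\pi_1(S[0:2n])=2\#_1(R[0:n])$. The transition at step $2n$ then stores $S[2\#_1(R[0:n])]=S[2n+1]$, and the odd bet is certain. This is precisely the paper's $G_p$ on $\{0,1\}^3$ (last trailing bit, last even-indexed leading bit, parity). No lookahead is needed, and your invariant should concern the start of step $2n$, not step $2n+1$.

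The lookahead buffer you substitute is left as ``routine case analysis,'' and as you describe it, it cannot always be realized. You want both candidates $S[2k]$ and $S[2k+2]$, with $k=\#_1(R[0:n])$, available when $R[n]$ is revealed at step $2n$. But whenever $k=n$ (already at $n=0$), the bit $S[2n+2]$ has not yet been read by any head. It can presumably be repaired by deferring the selection and checking that a single head, advancing at most one cell per step, keeps both candidates current. That repair is exactly the missing work. It cannot be treated as a lower-order detail, because your gambler stakes all of its capital on each odd bit. One mishandled case, even at an early step, sets $d_G$ to $0$ permanently and destroys strong success.
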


For each $p\in[0,1]\cap\Q$, define the adaptive $2$-head finite-state gambler
\[G_p = (Q,\{0,1\},\delta,\beta_p,q_0,1)\]
as follows.
\begin{itemize}
    \item The state space is $Q = \{0, 1 \}^3$; in each state $(b_1,b_2,b_3)$, $b_1$ stores the last bit read by the trailing head, $b_2$ stores the last \emph{even-indexed} bit read by the leading head, and $b_3$ indicates the parity of the leading head's position.
    \item The transition function $\delta:Q\times\{0,1\}^2\to Q\times\{0,1\}$ is given by
    \[\delta((b_1, b_2, b_3), (a_1, a_2)) =
    \begin{cases}
        ((a_1,a_2,1), a_2)&\text{if }b_3=0\\
        ((a_1,b_2,0), b_2)&\text{if }b_3=1.
    \end{cases}.\]
    Informally, the trailing head moves forward twice each time the leading head reads a one at an even index. Since it can move forward at most once per step, this is implemented by the trailing head moving forward immediately when the one is read by the leading head at an even index, then again in the next step, when the leading head is at an odd index. Thus, the trailing head moves when $b_3=0$ and $a_2=1$, and when $b_3=1$ and $b_2=1$, which are exactly the conditions for the last entry in $\delta((b_1, b_2, b_3), (a_1, a_2))$ to be 1.
    \item The betting function $\beta_p:Q\to\Delta_\Q(\{0,1\})$ is given by
    \[
        \beta_p((b_1, b_2, b_3)) =
        \begin{cases}
            \Bern_{b_1} & \text{if } b_3 = 1\\
            \Bern_p & \text{otherwise.}
        \end{cases}
    \]
    Note that $\Bern_{b_1}$ is a point distribution at 0 or 1, which informally means that $G_p$ has total confidence when predicting the odd-indexed bits.
    \item The initial state is $q_0 = (0, 0, 0)$.
\end{itemize}

Now let $p\in(0,1)$, let $R\in\{0,1\}^\omega$ be Martin-L\"of random with respect to $\mu_p$, and let $S=F(R)$. To prove Theorem~\ref{thm:ub}, it suffices to show that for all $s>\frac{\ent(p)}{2}$, there is an adaptive two-head finite-state gambler whose $s$-gale strongly succeeds on $S$. For this, let $\varepsilon\in (0,\min\{p,1-p\}/2)$,
and let $\tilde{p}\in\left(p-\varepsilon,p+\varepsilon\right)\cap\Q$
satisfy $|\ent(p)-\ent(\tilde{p})|<\varepsilon;$ such a $\tilde{p}$ exists because $\ent$ is continuous. Let
\begin{equation}\label{eq:s}
    s=\frac{\ent(p)}{2}+\left(3-\log(p)-\log(1-p)\right)\varepsilon.
\end{equation}

\begin{lemma}\label{lem:Gtildep}
    The $s$-gale $d_{G_{\tilde{p}}}^{(s)}$ strongly succeeds on $S$.
\end{lemma}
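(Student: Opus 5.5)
Two things have to be shown: that $G_{\tilde p}$ predicts every odd-indexed bit of $S$ with certainty, and that the capital gained on the even-indexed bits beats the $2^{(s-1)n}$ discount uniformly in $n$.

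\textbf{Step 1: an invariant for the trailing head.} By induction on $n$, I will show that at the start of step $2n+1$:
\begin{itemize}
\item the state satisfies $b_3=1$, $b_2=S[2n]=R[n]$, and $b_1=S[2\#_1(R[0:n])]$;
\item the trailing head is at position $2\#_1(R[0:n])+R[n]$.
\end{itemize}
At the start of each even step $2n$, the trailing head sits at $2\#_1(R[0:n])$ and $b_3=0$. The head reads $a_1=S[2\#_1(R[0:n])]$, and $\delta$ stores this in $b_1$. It also stores $a_2=R[n]$ in $b_2$, sets $b_3=1$, and moves the head by $R[n]$. In the odd step the head moves again by $b_2=R[n]$, so the position becomes $2\#_1(R[0:n+1])$, which restores the invariant.

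By \eqref{eq:F}, $S[2n+1]=S[2\#_1(R[0:n])]=b_1$. The bet $\Bern_{b_1}$ therefore places all capital on the correct bit, and odd steps multiply capital by exactly $2$.

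One point needs checking: the trailing head never passes the leading head. This holds because $2\#_1(R[0:n])\le 2n$, so the read index never exceeds the leading head's position. This is precisely the depth-one observation made earlier.

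\textbf{Step 2: the even steps.} On even steps the gambler bets $\Bern_{\tilde p}$ on $S[2n]=R[n]$. Consider the martingale after $2n$ or $2n+1$ bits. Write $k=\#_1(R[0:n])$. Then
\[
\log d_{G_{\tilde p}}(S[0:2n]) = n + n + k\log\tilde p + (n-k)\log(1-\tilde p).
\]
The first $n$ comes from the odd steps and the second from the factor $2$ on the even steps.

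By Lemma~\ref{lem:ones}, $k = pn \pm O(\sqrt{n\log n})$. Hence
\[
k\log\tilde p+(n-k)\log(1-\tilde p) = -n\bigl(\ent(p)+D(p\Vert\tilde p)\bigr) \pm O(\sqrt{n\log n}),
\]
where $D$ is the Kullback--Leibler divergence. Rather than bounding $D$ directly, I will use that $\log(1/\tilde p)$ and $\log(1/(1-\tilde p))$ differ from $\log(1/p)$ and $\log(1/(1-p))$ by $O(\varepsilon)$. The explicit constants come from the mean value theorem with $\tilde p\in(p/2,\,(1+p)/2)$. This gives a per-bit cost of at most $\ent(p)+(\text{const})\varepsilon$, with the constant dominated by $3-\log p-\log(1-p)$ times something at most $2$.

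Taking logs in \eqref{eq:dGs} with $2n$ bits:
\[
\log d^{(s)} = (s-1)\,2n + 2n - n\ent(p) - O(\varepsilon) n - O(\sqrt{n\log n}) = 2n\bigl(s-\ent(p)/2\bigr) - O(\varepsilon)n - o(n).
\]
By \eqref{eq:s}, $2n(s-\ent(p)/2)=2(3-\log p-\log(1-p))\varepsilon n$. This exceeds the $O(\varepsilon)n$ loss by a linear margin provided the constants are tracked. Odd-length prefixes change the value by only a bounded factor, so the liminf is $\infty$.

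\textbf{Main obstacle.} The main difficulty is the constant bookkeeping in Step 2. I need to show that the per-even-bit loss is at most
\[
\ent(p)+\bigl(2-\log p-\log(1-p)\bigr)\cdot 2\varepsilon\cdot\tfrac12\ldots
\]
that is, strictly less than $\ent(p)+2(3-\log p-\log(1-p))\varepsilon$.

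I will do this by bounding $|\log\tilde p-\log p|\le 2\varepsilon/(p\ln 2)$, which is crude, or alternatively by writing $\tilde p$-betting loss as $\ent(\tilde p)+(p-\tilde p)\log\frac{1-\tilde p}{\tilde p}$. This uses $|\ent(p)-\ent(\tilde p)|<\varepsilon$ and $|\log\frac{1-\tilde p}{\tilde p}|\le 1-\log p-\log(1-p)$ for $\tilde p$ within $p/2$ of $p$. The second route matches the given form of $s$ and is the one I would pursue.
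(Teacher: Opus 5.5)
Your proposal is correct and follows the paper's proof. It uses the same trailing-head invariant; your version, which separates the head's position $2\#_1(R[0:n])$ at the even step where $b_1$ is loaded from its position at the following odd step, is more careful than the paper's wording. It also uses the same product formula $d_{G_{\tilde p}}(S[0:2n])=2^{2n}\tilde p^{k}(1-\tilde p)^{n-k}$ and the same ingredients: Lemma~\ref{lem:ones}, $|\ent(p)-\ent(\tilde p)|<\varepsilon$, and $\tilde p>p/2$, $1-\tilde p>(1-p)/2$.

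Your second bookkeeping route, via $p\log\frac{1}{\tilde p}+(1-p)\log\frac{1}{1-\tilde p}=\ent(\tilde p)+(p-\tilde p)\log\frac{1-\tilde p}{\tilde p}$, does close the argument. It gives $\log d_{G_{\tilde p}}^{(s)}(S[0:2n])\geq(4-\log p-\log(1-p))\varepsilon n-O(\sqrt{n\log n})$, where the paper instead shifts exponents from $p\pm\varepsilon$ to $\tilde p\pm 2\varepsilon$. One small correction: your bound on $\bigl|\log\frac{1-\tilde p}{\tilde p}\bigr|$ needs $\tilde p$ within $\min\{p,1-p\}/2$ of $p$, not merely within $p/2$.
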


\begin{proof}    
    We first show that $G_{\tilde{p}}$ correctly predicts every odd-indexed bit of $S$. For each $n\in\N$, let $q(n)=(b_1(n),b_2(n),b_3(n))\in Q$ be the state of the $G_{\tilde{p}}$ when it bets on the bit $S[n]$, i.e., $q(n)=q(S[0:n])$.
    
    Consider the state
    \[q(2n+1)=(b_1(2n+1),b_2(2n+1),b_3(2n+1))\]
    of $G_{\tilde{p}}$ when it bets on a bit $S[2n + 1]$, for some arbitrary $n \in \N$. Since the parity bit is initially 0 and flips in each step, we will have $b_3(2n+1)=1$, and $\beta_{\tilde{p}}(q(2n+1))=\Bern_{b_1(2n+1)}$. As described in the gambler construction, $b_1(2n+1)$ stores the last bit read by the trailing head, and the trailing head moves forward twice each time the leading head read a one at an even index of $S$. Thus, the position of the trailing head is twice the number of ones that appear at even indices in $S[0:2n]$. By the definition of $F$, this is exactly $\#_1(R[0:n])$, so
    \[b_1(2n+1)=S[2\#_1(R[0:n])]=S[2n+1],\]
    and it follows that
    \[\beta_{\tilde{p}}(q(2n+1))(S[2n+1])=\Bern_{S[2n+1]}(S[2n+1])=1.\]

    At even-indexed bits $S[2n]$, for some $n\in\N$, the parity bit in the state $q(2n)$ of $G_{\tilde{p}}$ is 0, so the betting distribution is $\beta_{\tilde{p}}(q(2n))=\Bern_{\tilde{p}}$.

    We now use these betting distributions to prove that for all sufficiently large $n\in\N$,
    \begin{equation}\label{eq:sgalebd}
        d_{G_{\tilde{p}}}(S[0:2n])>2^{(2-\ent(p)+(2\log(p)+2\log(1-p)-5)\varepsilon)n}.
    \end{equation}

    Recall that we have
    \begin{equation}\label{eq:epsilon}
        \varepsilon\in \left(0,\frac{\min\{p,1-p\}}{2}\right),
    \end{equation}
    and
    \begin{equation}\label{eq:tildep}
        \tilde{p}\in\left(p-\varepsilon,p+\varepsilon\right)\cap\Q,
    \end{equation}
    satisfying
    \begin{equation}\label{eq:entbd}
        |\ent(p)-\ent(\tilde{p})|<\varepsilon.
    \end{equation}
    Note that~\eqref{eq:epsilon} and~\eqref{eq:tildep} guarantee
    \begin{equation}\label{eq:half}
        \tilde{p}>\frac{p}{2}\qquad\text{and}\qquad 1-\tilde{p}>\frac{1-p}{2}.
    \end{equation}
    
    We also have, for all $n\in\N$,
    \begin{equation}\label{eq:oddbits}
        \beta_{\tilde{p}}(q(2n+1))(S[2n+1])=\Bern_{S[2n+1]}(S[2n+1])=1
    \end{equation}
    and
    \begin{equation}\label{eq:evenbits}
        \beta_{\tilde{p}}(q(2n))=\Bern_{\tilde{p}}.
    \end{equation}
    
    Hence, for all $n\in\N$,
    \begin{align*}
        d_{G_{\tilde{p}}}(S[0:2n])&=\prod_{i=0}^{2n-1}2\beta_{\tilde{p}}(q(i))(S[i])\tag{by~\eqref{eq:dG}}\\
        &=2^{2n}\prod_{j=0}^{n-1}\Bern_{\tilde{p}}(S[2j])\tag{by~\eqref{eq:oddbits} and~\eqref{eq:evenbits}}\\
        &=2^{2n}\prod_{j=0}^{n-1}\Bern_{\tilde{p}}(R[j])\tag{by~\eqref{eq:F}}\\
        &=2^{2n}\tilde{p}^{\#_1(R[0:n])}(1-\tilde{p})^{\#_0(R[0:n])},
    \end{align*}
    and
    \begin{equation}\label{eq:evenloss}
        d_{G_{\tilde{p}}}(S[0:2n+1])\geq\min\{\tilde{p},1-\tilde{p}\}d_{G_{\tilde{p}}}(S[0:2n]).
    \end{equation}
    
    By Lemma~\ref{lem:ones}, then,
    \begin{align*}
        d_{G_{\tilde{p}}}(S[0:2n])&\geq 2^{2n}\tilde{p}^{pn+O(\sqrt{n\log n})}(1-\tilde{p})^{(1-p)n+O(\sqrt{n\log n})}\\
        &\geq 2^{2n}\tilde{p}^{pn+\varepsilon n}(1-\tilde{p})^{(1-p)n+\varepsilon n},
    \end{align*}
    for all sufficiently large $n\in\N$. We can rewrite this as
    \begin{align*}
        d_{G_{\tilde{p}}}(S[0:2n])&\geq 2^{2n}\tilde{p}^{(p+\varepsilon)n}(1-\tilde{p})^{(1-(p-\varepsilon))n}\\
        &> 2^{2n}\tilde{p}^{(\tilde{p}+2\varepsilon)n}(1-\tilde{p})^{(1-(\tilde{p}-2\varepsilon))n}\tag{by~\eqref{eq:tildep}}\\
        &=2^{2n}\tilde{p}^{\tilde{p}n+2\varepsilon n}(1-\tilde{p})^{(1-\tilde{p})n+2\varepsilon n}\\
        &=2^{2n-\ent(\tilde{p})n}\tilde{p}^{2\varepsilon n}(1-\tilde{p})^{2\varepsilon n}\tag{by~\eqref{eq:H}}\\
        &=2^{(2-\ent(\tilde{p})+2\log(\tilde{p})\varepsilon+2\log(1-\tilde{p})\varepsilon)n}\\
        &> 2^{(2-\ent(p)-\varepsilon+2\log(\tilde{p})\varepsilon+2\log(1-\tilde{p})\varepsilon)n}\tag{by~\eqref{eq:entbd}}\\
        &> 2^{(2-\ent(p)-\varepsilon+2(\log(p)-1)\varepsilon+2(\log(1-p)-1)\varepsilon)n}\tag{by~\eqref{eq:half}}\\
        &> 2^{(2-\ent(p)+(2\log(p)+2\log(1-p)-5)\varepsilon)n},
    \end{align*}
    so~\eqref{eq:sgalebd} holds for all sufficiently large $n\in\N$.
    
    Therefore, recalling~\eqref{eq:dGs} and~\eqref{eq:s}, for all sufficiently large $n\in\N$,
    \begin{align*}
        d_{G_{\tilde{p}}}^{(s)}(S[0:2n])&=2^{(\ent(p)+(6-2\log(p)-2\log(1-p))\varepsilon-2)n} d_{G_{\tilde{p}}}(S[0:2n])\\
        &>2^{\varepsilon n},
    \end{align*}
    which implies
    \[\liminf_{n\to\infty}d_{G_{\tilde{p}}}^{(s)}(S[0:2n])=\infty.\]
    At even-indexed bits, the $s$-gale value can only be multiplied by $2^s\tilde{p}$ or $2^s(1-\tilde{p})$, so
    \[d_{G_{\tilde{p}}}^{(s)}(S[0:2n+1])\geq2^s\min\{\tilde{p},(1-\tilde{p})\}\cdot d_{G_{\tilde{p}}}^{(s)}(S[0:2n])\]
    holds for all $n\in\N$.
    We therefore also have
    \[\liminf_{n\to\infty}d_{G_{\tilde{p}}}^{(s)}(S[0:n])=\infty,\]
    which is the definition of strong success on $S$.
\end{proof}

As $G_{\tilde{p}}$ was an adaptive two-head finite-state gambler and $s$ can be made arbitrarily close to $\frac{\ent(p)}{2}$ by letting $\varepsilon$ approach 0, it follows from Lemma~\ref{lem:Gtildep} that
\[\aDimFS{(2)}(F(R))\leq\frac{\ent(p)}{2}\]
completing the proof of Theorem~\ref{thm:ub}.

\section{Lower Bound on Oblivious Multi-Head Dimension}\label{sec:lb}

The trailing head movement of the adaptive gamblers described in Section~\ref{sec:ub} depend on the sequence $R$; its asymptotic speed is exactly the asymptotic density $p$ of ones in the underlying sequence $R$. We now prove that when $p$ is irrational, no oblivious multi-head finite-state gambler can match the performance of an adaptive gambler on $F(R)$.
\begin{theorem}\label{thm:lb}
    If $p\in(0,1)\setminus\Q$ is computable and $R\in\{0,1\}^\omega$ is a $\mu_p$-Martin-L\"of random sequence, then
    \[\odimFS{\mh}(F(R))\geq (1+p)\frac{\ent(p)}{2}.\]
\end{theorem}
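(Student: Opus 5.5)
We fix $h\geq 2$ and an oblivious $h$-FSG $G$, and show that for every $s<(1+p)\ent(p)/2$ the $s$-gale $d_G^{(s)}$ does not succeed on $S=F(R)$. The strategy is a compression argument. Success of $d_G^{(s)}$ on infinitely many prefixes $S[0:N]$ would let us describe those prefixes, given the finite description of $G$, with roughly $sN+O(\log N)$ bits. We then contradict this by a Kolmogorov-complexity lower bound on $S[0:N]$, computed conditionally on the information available to the gambler.

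\textbf{Step 1 (compression from gambling).} We use the standard fact that a finite-state gambler with $d_G(S[0:N])\ge 2^{(1-s)N}$ yields arithmetic coding. Since $G$ is oblivious, the head positions are a fixed computable function of $n$, so both encoder and decoder can run $G$ bit by bit. Each bit $S[n]$ is encoded with probability $\beta(q(S[0:n]))(S[n])$. The state depends only on bits at already-read positions $\le n$, so the decoder can reconstruct it. This gives
\[K(S[0:N])\le N-\log d_G(S[0:N])+O(\log N),\]
so success at $s$ infinitely often gives $K(S[0:N])\le sN+O(\log N)$ infinitely often.

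\textbf{Step 2 (what the data contain).} We next show that $K(S[0:N])\ge (1+p)\ent(p)N/2-o(N)$ for all large $N$. Write $N=2n$. The even bits are $R[0:n]$, which contribute $\ent(p)n$ by~\eqref{eq:bernoullirandoment}. The odd bits are copies of even bits, so $S[0:2n]$ is computable from $R[0:n]$, and $K(S[0:2n])\approx \ent(p)n$ only. With this naive count we only obtain $\ent(p)/2$, which is the adaptive bound.

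The point, then, is that the lower bound must be on what the \emph{gambler} can exploit, not on the complexity of $S$ itself. The gambler's savings come only from the odd bits. I would therefore refine Step 1 by charging the gambler per bit and showing it cannot predict most odd bits. An odd bit $S[2m+1]$ equals $R[\#_1(R[0:m])]$, located near index $2pm$. A trailing head of rational speed $\sigma_i\neq p$ sits at $\sigma_i\cdot 2m+O(1)$, drifting linearly away from $2pm$. A finite-state machine with bounded lag information therefore cannot identify which of the $\Theta(\sqrt m)$-uncertain or linearly-distant bits is the target.

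\textbf{Step 3 (the formal counting).} Concretely, I would give a description of $R[0:n]$, using $d_G$, that costs at most $sN+(\text{extra})$ bits. The extra consists of the even-indexed bits of $S[0:2n]$ with index beyond $2pn$. These are the bits $R[pn:n]$, and they are never copied into $S[0:2n]$, since the copies only reach index $\approx 2pn$. The claim to establish is that the odd bits are of no help to $G$ in predicting $R$ beyond what the information accounting allows. Rewrite $S[0:2n]$ as the pair consisting of $R[0:n]$ and the odd bits, which essentially equal $R[0:pn]$ re-read with multiplicities. The ones of $R[0:pn]$ each appear twice among odd positions and the zeros once. The total information in the sequence is $\ent(p)n$, spread over $2n$ bits.

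For the oblivious gambler, I would argue that it gains on an odd bit only when some trailing head, or the finite memory, has "seen" that bit recently. With rational speeds $\sigma_i\ne p$, a head is within distance $k$ of the target index for only $O(k)$ of the time. Thus the gambler effectively predicts the odd bits with no better than fresh Bernoulli-$p$ knowledge, at a cost of $\ent(p)$ per bit on a fraction of the odd bits. The target $(1+p)\ent(p)/2$ suggests the following decomposition. The $n$ even bits cost $\ent(p)n$. Of the $n$ odd bits, the repeated copies (second copies of ones, about $pn$ of them) are predictable from the immediately preceding odd bit by finite state and cost nothing. The remaining $pn$ first copies cost $\ent(p)$ each. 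The total is $(1+p)\ent(p)n$ over $2n$ bits.

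\textbf{Main obstacle.} The hardest step is to make rigorous that an oblivious gambler gains nothing on the first copies. I would first try to do this via symmetry of information~\eqref{eq:soi} applied blockwise. Partition time into blocks. Within each block the trailing heads read windows at positions $\sigma_i\cdot 2m$ which, by irrationality of $p$ and Observation~\ref{obs:speed}, are eventually disjoint from the window near $2pm$ holding the targets. Randomness of $R$ (via~\eqref{eq:bernoullirandom}) then makes the targets' conditional complexity, given everything the heads and leading head have read, essentially their full entropy.
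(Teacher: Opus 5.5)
\paragraph{There is a genuine gap.} Your final paragraph outlines the paper's strategy, but you never carry out the step that produces the factor $1+p$, and the one tool you state precisely cannot produce it. Step~1 bounds the \emph{global} complexity $K(S[0:N])$. As you note yourself, $K(S[0:2n])\approx\ent(p)n$, so any argument routed through it is capped at $\ent(p)/2$.

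The missing ingredient is a \emph{local, conditional} version of Step~1, which is the paper's Lemma~\ref{lem:dtok}. Let $U$ contain every position below $m$ that the trailing heads visit while the leading head reads $S[m:n]$; this set is computable from $m,n$ because $G$ is oblivious. A decoder given $G$, $m$, $n$, the $O(1)$-bit state at time $m$, and $S[U]$ can decode the arithmetic code of $S[m:n]$. Hence
$K(S[m:n]\mid S[U])\leq (n-m)-\log\frac{d_G(S[0:n])}{d_G(S[0:m])}+O(\log n)$.

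The conditioning must be on $S[U]$ and the finite state only, not on ``everything the heads and leading head have read.'' The leading head's history $S[0:2m]$ already contains the targets $R[pm:pn]$. Conditioning on it drops the block's complexity back to about $\ent(p)(n-m)$, i.e.\ to rate $\ent(p)/2$. It is precisely the finiteness of the state that lets you discard that history.

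The blocks must also be chosen specifically, not just ``partition time into blocks.'' Take $[2m:2n]$ with $m=\lfloor\gamma n\rfloor$ and $\gamma<1$ close enough to $1$, depending on the gaps $|\sigma_i-p|$. This makes the trailing-head windows $[2\sigma_i m-c_G,\,2\sigma_i n+c_G]$ miss $[2pm,2pn]$. The blocks must still have length linear in $n$, so that the $O(\sqrt{n\log n})$ fluctuations of Lemma~\ref{lem:ones} are negligible.

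Within such a block you still need the actual bound $K(S[2m:2n]\mid S[U])\geq(1+p)\ent(p)(n-m)-O(\sqrt{n\log n})$. The paper obtains it in three steps:
\begin{itemize}
\item $S[2m:2n]$ essentially determines $R[A]$, where $A=[pm:pn]\cup[m:n]$.
\item $S[U]$ is essentially determined by $R[B]$, where $B=[0:pm]\cup[pn:m]$ is disjoint from $A$.
\item Symmetry of information together with~\eqref{eq:bernoullirandoment} then gives the bound.
\end{itemize}
Finally the blocks are chained via $m_{i+1}=\lceil m_i/\gamma\rceil$, losing a factor $\gamma$ that vanishes as $\gamma\to1$. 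None of these pieces is supplied in your proposal.

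\paragraph{The per-bit accounting in Step~3 is wrong in its details.} The number of odd positions that copy $R[j]$ is the gap between two consecutive ones of $R$, which is geometric with mean $1/p$. It has nothing to do with whether $R[j]$ is a one, so it is not ``ones twice, zeros once.'' In fact $S[2k+3]$ copies the same source as $S[2k+1]$ exactly when $S[2k]=0$. So the $n$ odd bits split into about $(1-p)n$ repeats and $pn$ first copies, not $pn+pn$.

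Your total $(1+p)\ent(p)n$ is nonetheless right: it is what a one-head gambler attains using this repeat rule. That makes it an upper-bound heuristic. The claim that an oblivious multi-head gambler cannot do better on the first copies is exactly what the block argument above is needed for. A ``gains only when a head has seen it recently'' charge cannot simply be asserted bit by bit.

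Finally, the scheme ``describe $R[0:n]$ via $d_G$ plus the extra bits $R[pn:n]$'' cannot help. $R[0:n]$ is already computable from $S[0:2n]$, and $K(R[0:n])\approx\ent(p)n$. Adding bits to a description therefore yields something weaker than the $\ent(p)/2$ bound you already have.
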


To prove this theorem, let $p\in[0,1]\setminus \Q$ be computable, $R\in\{0,1\}^\omega$ a $\mu_p$-Martin-L\"of random sequence, $h\geq 1$, and $S=F(R)$. By the definition of $\odimFS{\mh}$, it suffices to show that
\[\odimFS{(h)}(S)\geq (1+p)\frac{\ent(p)}{2}.\]
For this $h$-head finite-state predimension lower bound, we will follow the same overall approach taken by Huang et al.~\cite{mhfsd} and Cruz et al.~\cite{CGLL26}. That is, we will prove conditional Kolmogorov complexity lower bounds on substrings $S[m:n]$, where the string being conditioned on includes all information that the trailing heads might have accessed while the leading head reads $S[m:n]$. From this lower bound, we will use the following lemma to infer an upper bound on the capital growth of $G$'s martingale during this period, from which we will then derive a lower bound on the string's $h$-head finite-state predimension.
\begin{lemma}[Cruz et al.~\cite{CGLL26}, generalizing Huang et al.~\cite{mhfsd}]\label{lem:dtok}
    Let $h\geq 1$, $G$ any $h$-head finite-state gambler, $\alpha,\varepsilon\in(0,1)\cap\Q$, and $S\in\{0,1\}^\omega$. If $n,m\in\N$ and $U\subseteq\N$ satisfy 
    \begin{enumerate}
        \item $n-m$ is sufficiently large,
        \item $U$ is uniformly computable given $n$,
        \item $\displaystyle [0:m]\cap\bigcup_{i=1}^{h-1}[\pi_i(S[0:m]):\pi_i(S[0:n])+1]\subseteq U$, and
        \item $K(S[m:n]\mid S[U])\ge (1-\alpha)(n-m)$,
    \end{enumerate}
    then
    \begin{equation}\label{eq:nowin}
        \max_{m< k\leq n}\frac{d_G(S[0:k])}{d_G(S[0:m])}\leq 2^{(\alpha+\varepsilon)(n-m)}.
    \end{equation}
\end{lemma}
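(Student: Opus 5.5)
The plan is the standard "capital gain implies compression" argument, carried out relative to the side information $S[U]$. I would show that if $G$ multiplies its capital by more than $2^{(\alpha+\varepsilon)(n-m)}$ at some time $k\in(m,n]$, then $S[m:n]$ has a short description given $S[U]$. That contradicts hypothesis~4.

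First I would set up a simulation. Given $n$, the string $S[U]$, the state $q(S[0:m])$ and the head positions $\pi_1(S[0:m]),\ldots,\pi_{h-1}(S[0:m])$, a program can run $G$ forward from time $m$ on any candidate continuation $w\in\{0,1\}^{n-m}$.
\begin{itemize}
\item Since $U$ is uniformly computable from $n$ (hypothesis~2), the program can place the bits of $S[U]$ at their correct indices.
\item Whenever a trailing head reads a position below $m$, that position lies in some $[\pi_i(S[0:m]):\pi_i(S[0:n])+1]$, so it is in $U$ by hypothesis~3 and the bit is known.
\item Whenever a head reads a position $\geq m$, the bit is a bit of $w$. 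Trailing heads never pass the leading head, so that bit has already been fixed.
\end{itemize}
This holds even though the head movements are adaptive, because the simulation reproduces them bit by bit. The extra data (the state and $h-1$ positions, each at most $n$) costs $O(\log n)$ bits, and $n$ itself costs $O(\log n)$ bits.

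Second, the simulation yields a computable probability measure $\nu$ on strings $w$ of length at most $n-m$. It is given by $\nu(w)=\prod_{j<|w|}\beta(q_j)(w[j])$, where $q_j$ is the simulated state; it is exact because the bets are rational. By~\eqref{eq:dG}, for each $m<k\leq n$,
\[\frac{d_G(S[0:k])}{d_G(S[0:m])}=2^{k-m}\,\nu(S[m:k]).\]
Now describe $S[m:n]$ in three pieces: $k$ (which takes $O(\log n)$ bits), then a Shannon--Fano codeword for $S[m:k]$ under $\nu$, then the remaining $n-k$ bits verbatim. This gives
\[K(S[m:n]\mid S[U])\leq \log\frac{1}{\nu(S[m:k])}+(n-k)+O(\log n)=(n-m)-\log\frac{d_G(S[0:k])}{d_G(S[0:m])}+O(\log n).\]
Combining this with hypothesis~4 yields
\[\log\frac{d_G(S[0:k])}{d_G(S[0:m])}\leq\alpha(n-m)+O(\log n)\]
for every $k$. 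Finally, absorb the $O(\log n)$ term into $\varepsilon(n-m)$.

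I expect that last absorption to be the main obstacle. It requires $n-m$ to dominate $\log n$, not merely to be large in absolute terms. I would interpret ``$n-m$ sufficiently large'' accordingly, for instance $n-m\geq c\log n$ with $c$ depending on $G$, $h$, $\alpha$ and $\varepsilon$. This is harmless in the intended applications, where $n-m$ is a constant fraction of $n$. The other point to check carefully is that the trailing heads never need a bit below $m$ outside $U$. That is exactly what hypothesis~3 guarantees, and it is where adaptivity enters: $\pi_i(S[0:n])$ is data-dependent, but the union in hypothesis~3 covers whatever the actual run uses.
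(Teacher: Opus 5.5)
Your simulate-and-code argument is the standard route. The paper does not actually reprove this lemma; it imports it from Cruz et al., so there is no in-paper proof to compare against step by step. Your reading of hypothesis~1 as ``$n-m$ large compared with $\log n$'' is exactly what the paper's own remark relies on, since it lets $\varepsilon(n-m)$ absorb an $O(\log n)$ discrepancy. That reading is genuinely needed, not an artifact of your proof. With an adaptive $G$ and, say, $U=[0:2^{\lfloor\log n\rfloor-2}]$, whose size does not reveal $m$, the trailing head can copy a window of $S[U]$ whose location costs about $\log n$ bits to specify. So with $n-m$ fixed and $n\to\infty$ the conclusion can fail. In the application, $n-m=\Theta(n)$, so nothing is lost.

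One step needs repair: the claim that the simulation defines a measure $\nu$ computable from your advice. Your bullets only control the run on the true continuation, because hypothesis~3 concerns the actual positions $\pi_i(S[0:n])$. A Shannon--Fano code, however, needs $\nu(w)$ for every $w\in\{0,1\}^{k-m}$. Because movement is adaptive, on a counterfactual $w$ a simulated trailing head can advance past $\pi_i(S[0:n])$ and ask for a bit below $m$ that is not in $U$. The decoder cannot supply that bit.

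There are two easy fixes.
\begin{itemize}
\item Answer every such query with $0$. This is just the masked string of Cruz et al., which the paper's remark notes is computable from $S[U]$ and $n$. The resulting $\nu$ is a genuine computable measure that agrees with the real run on $S[m:k]$, which is all your inequality uses.
\item Use arithmetic (Shannon--Fano--Elias) coding instead. Its decoder only ever simulates along the already-decoded, hence true, prefix.
\end{itemize}
Also add $m$ to the $O(\log n)$ advice. The decoder needs it to know which head positions refer to $w$ and how many bits to output.
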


\begin{remark}
    Our statement of this lemma differs slightly from~\cite{CGLL26}, in that we require $U$ to be uniformly computable. This is because in~\cite{CGLL26}, for $A\subseteq[0:n]$, $X[A]$ denoted the length-$n$ string ``masked'' string where $X[A][i]=X[i]$ if $i\in A$ and 0 otherwise, whereas in the present paper it denotes the length-$|A|$ string of bits in $X$ whose indices are in $A$. When the set $A$ is uniformly computable given $n$, these two strings are computable from each other, so their complexities can only differ by $O(\log n)$. The $\varepsilon$ error parameter in Lemma~\ref{lem:dtok} accommodates this difference whenever $n-m$ is sufficiently large, so our statement is equivalent to that of~\cite{CGLL26}.
\end{remark}

To prove the conditional Kolmogorov complexity lower bound that satisfy the fourth condition in Lemma~\ref{lem:dtok}, we will consider substrings of the form $S[2m:2n]$, to make the endpoints conveniently even. We will argue that each such substring contains almost all the information from two substrings of $R$: Speaking informally, the even-indexed bits in $S[2m:2n]$ come from $R[m:n]$. As the density of ones in $R$ is approximately $p$, the odd-indexed bits in $S[2m:2n]$ will copied forward from approximately $S[2pm:2pn]$, and these bits ultimately come from approximately $R[pm:pn]$. Thus, $S[2m:2n]$ contains all the information in approximately $(n-m)(1+p)$ bits of $R$, meaning its Kolmogorov complexity cannot be much less than $(n-m)(1+p)\ent(p)$.

Furthermore, we argue that this approximate bound holds (for appropriate choices of $m$ and $n$) even when we condition on the information provided by the trailing heads. This is because, for sufficiently large $n$ and $m$ sufficiently close to $n$, oblivious trailing heads with rational speeds can't access $S[2pm:2pn]$ while the leading head reads $S[m:n]$. All information they access during this period is about other parts of $R$, which are not informative about the pertinent substrings $R[pm:pn]$ and $R[m:n]$.

To formalize this intuitive argument, let $G$ be an oblivious $h$-head finite-state gambler. Let $\sigma_1,\ldots,\sigma_{h-1}$ be the rational trailing head speeds of $G$, as in Observation~\ref{obs:speed}, noting that $p$, being irrational, is distinct from each of these speeds. Thus, each $|\sigma_i n-pn|$ is linear in $n$. We therefore have the following.
\begin{observation}\label{obs:gap}
    There exist $\gamma_0\in(0,1)$ and $n_0\in\N$ such that, for all $\gamma\in(\gamma_0,1)$, all $n\geq n_0$, and all trailing heads $i$,
    \[[\lfloor \sigma_i\lfloor\gamma n\rfloor\rfloor-c_G,\lceil\sigma_i n\rceil+c_G]\cap[\lfloor p\lfloor\gamma n\rfloor\rfloor,\lceil pn\rceil]=\emptyset,\]
    where $c_G$ is the constant from Observation~\ref{obs:speed}
\end{observation}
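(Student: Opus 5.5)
The plan is a direct elementary argument, treating each trailing head separately and then taking a maximum over the finitely many heads. The one point needing care is making $n_0$ independent of $\gamma$. If $h=1$ there are no trailing heads and the statement is vacuous, so take, say, $\gamma_0=\frac12$ and $n_0=0$. Otherwise, since $p$ is irrational and each $\sigma_i\in(0,1)\cap\Q$, we have $\sigma_i\neq p$ for every $i$, so each trailing head satisfies either $\sigma_i>p$ or $\sigma_i<p$.

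For each $i$ define the ratio $\rho_i=p/\sigma_i$ if $\sigma_i>p$, and $\rho_i=\sigma_i/p$ if $\sigma_i<p$. In both cases $\rho_i\in(0,1)$. Let
\[\gamma_0=\max\left\{\tfrac12,\rho_1,\ldots,\rho_{h-1}\right\}\in(0,1),\]
and set $\delta=\min_i \delta_i>0$, where $\delta_i=\sigma_i\gamma_0-p$ when $\sigma_i>p$ and $\delta_i=p\gamma_0-\sigma_i$ when $\sigma_i<p$. Each $\delta_i$ is nonnegative by the choice of $\gamma_0$. To get strict positivity, replace $\gamma_0$ by the midpoint between $\max_i\rho_i$ and $1$ (or by any value strictly above every $\rho_i$). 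Then choose $n_0$ with $\delta n_0> c_G+3$.

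Now fix $\gamma\in(\gamma_0,1)$, $n\geq n_0$, and a head $i$, and use $\lfloor x\rfloor> x-1$ and $\lceil x\rceil< x+1$.

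In the case $\sigma_i>p$, the trailing interval lies to the right of the $p$-interval, because
\[\lfloor\sigma_i\lfloor\gamma n\rfloor\rfloor-c_G> \sigma_i\gamma n-\sigma_i-1-c_G\geq pn+\delta n-2-c_G> pn+1>\lceil pn\rceil.\]
Here we used $\sigma_i\gamma\ge\sigma_i\gamma_0=p+\delta_i\geq p+\delta$.

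In the case $\sigma_i<p$, the trailing interval lies to the left, because
\[\lceil\sigma_i n\rceil+c_G<\sigma_i n+1+c_G\leq p\gamma n-\delta n+1+c_G< p\gamma n-p-1<\lfloor p\lfloor\gamma n\rfloor\rfloor.\]
Here we used $p\gamma\geq p\gamma_0=\sigma_i+\delta_i\geq\sigma_i+\delta$, together with $\delta n\geq\delta n_0>c_G+3$.

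In either case the two intervals are disjoint. If the left endpoint exceeds the right endpoint the interval is empty, which is consistent with the paper's convention for $[m:n]$ and only helps.

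The only subtle step is uniformity in $\gamma$: the gap $\delta n$ must be bounded below independently of $\gamma$. This is achieved by evaluating $\sigma_i\gamma-p$ and $p\gamma-\sigma_i$ at the worst case $\gamma=\gamma_0$, since both expressions are increasing in $\gamma$. Everything else is routine floor and ceiling bookkeeping with additive constants absorbed by the choice of $n_0$.
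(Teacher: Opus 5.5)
Your proof is correct and follows the same reasoning the paper gives. The paper only notes that irrationality forces $\sigma_i\neq p$, so the relevant gaps grow linearly in $n$; you make this explicit with the thresholds $\rho_i$, handle uniformity in $\gamma$ by evaluating the gap at $\gamma_0$, and need only the cosmetic reordering of choosing $\gamma_0\in(\max_i\rho_i,1)$ before defining $\delta$ and $n_0$.
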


Fix $\gamma_0,n_0$ as in Observation~\ref{obs:gap}, and choose rational constants $\gamma\in(\gamma_0,1)$, $\varepsilon\in (0,1)$, and
\[\alpha\in\left(1-(1+p)\frac{\ent(p)}{2},1-(1+p)\frac{\ent(p)}{2}+\varepsilon\right).\]
Let $n\geq n_0$ and $m=\lfloor\gamma n\rfloor$ be sufficiently large to satisfy the first condition of Lemma~\ref{lem:dtok} with this choice of $\varepsilon$ and $\alpha$. Let
\[U=[0:2m]\cap\bigcup_{i=1}^{h-1}[2\sigma_i m-c_G,2\sigma_in+c_G],\]
which satisfies the second condition of Lemma~\ref{lem:dtok} and, by Observation~\ref{obs:speed}, also satisfies the third condition. Then by Observation~\ref{obs:gap}, we have
\begin{equation}\label{eq:disjoint}
    U\cap [\lfloor 2pm\rfloor,\lceil 2pn\rceil]=\emptyset.
\end{equation}

Let $A=[\lfloor pm\rfloor :\lfloor pn\rfloor]\cup[m:n]$ and $B=[0:\lfloor pm\rfloor]\cup[\lfloor pn\rfloor:m]$,
noting that $A$ and $B$ partition $[0:n]$. Roughly, $R[A]$ is the part of $R$ that pertains to $S[2m:2n]$.

We now prove technical lemmas to assist in the proof of Theorem~\ref{thm:lb}. Briefly, Lemma~\ref{lem:ra} tells us that $S[2m:2n]$ contains almost all the information in $R[A]$, and Lemma~\ref{lem:rb} tells us that $R[B]$ contains almost all the information in $S[U]$, which is approximately the part of $S$ read by the trailing heads while the leading head reads $S[2m:2n]$. Since $R[A]$ and $R[B]$ are disjoint parts of a $\mu_p$-Martin-L\"of random sequence, they are essentially independent of each other, so these two lemmas imply that $S[U]$ contains almost no information about $S[m:n]$, yielding Lemma~\ref{lem:fourthcond}.

\begin{lemma}\label{lem:ra}
    $K(R[A]\mid S[2m:2n])=O\big(\sqrt{n\log n}\big)$.
\end{lemma}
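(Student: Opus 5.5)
We will describe a short program that, given $S[2m:2n]$ and $O(\sqrt{n\log n})$ bits of advice, outputs $R[A]$. The advice consists of $n$ and $m$ (which cost $O(\log n)$ bits, and in fact $m=\lfloor\gamma n\rfloor$ with rational $\gamma$ fixed), together with the integers $k_m=\#_1(R[0:m])$ and $k_n=\#_1(R[0:n])$, plus a correction string described below. The second block $R[m:n]$ is recovered directly, since $R[j]=S[2j]$ for $m\le j<n$, so the even-indexed bits of $S[2m:2n]$ are exactly $R[m:n]$.

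For the first block $R[\lfloor pm\rfloor:\lfloor pn\rfloor]$, we use the odd-indexed bits. By~\eqref{eq:F}, for $m\le j<n$ we have $S[2j+1]=S[2\#_1(R[0:j])]=R[\#_1(R[0:j])]$. The running count $\#_1(R[0:j])$ starts at $k_m$ and increases by one exactly when $R[j]=1$, and the decoder can track this from the even bits it already knows. So, knowing $k_m$, the decoder walks through $j=m,\ldots,n-1$ and learns $R[t]$ for every $t$ with $k_m\le t<k_n$; each value $t$ in this range is attained as $\#_1(R[0:j])$ for some $j$, because the count increments by one at a time. Thus $R[k_m:k_n]$ is computable from $S[2m:2n]$, $k_m$ (and $k_n$ comes out of the same computation).

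It remains to bridge from $[k_m:k_n]$ to $[\lfloor pm\rfloor:\lfloor pn\rfloor]$. By Lemma~\ref{lem:ones}, $|k_m-pm|$ and $|k_n-pn|$ are $O(\sqrt{n\log n})$. The symmetric difference between the two intervals therefore has size $O(\sqrt{n\log n})$, and we append to the advice the bits of $R$ at the missing positions in $[\lfloor pm\rfloor:\lfloor pn\rfloor]\setminus[k_m:k_n]$ literally. The decoder computes $\lfloor pm\rfloor$ and $\lfloor pn\rfloor$ from $m,n$ because $p$ is computable, then discards surplus bits and inserts the literal ones. The total advice is $O(\log n)$ for the integers $k_m$, $n$ and the self-delimiting length fields, plus $O(\sqrt{n\log n})$ literal bits, which gives the bound.

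The only delicate point is making sure that every index in $[k_m:k_n]$ really is hit by some $j\in[m:n)$. This follows because $\#_1(R[0:j])$ ranges over all integers from $k_m$ to $k_n-1$ as $j$ runs from $m$ to $n-1$, since it starts at $k_m$ and steps by at most one. Everything else is bookkeeping.
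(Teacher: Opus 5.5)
Your proposal is correct and follows essentially the same route as the paper. Both arguments read $R[m:n]$ off the even-indexed bits and extract $R[\#_1(R[0:m]):\#_1(R[0:n])]$ from the odd-indexed bits; the paper's Algorithm~\ref{alg:StoR} simply appends $S[2k+1]$ whenever $S[2k]=1$, which is your running-count argument without tracking indices. Both then use Lemma~\ref{lem:ones} to patch the symmetric difference with $[\lfloor pm\rfloor:\lfloor pn\rfloor]$, which has size $O\big(\sqrt{n\log n}\big)$. Your version is just more explicit about the advice ($k_m$, $m$, $n$, and computing $\lfloor pm\rfloor$ from the computable irrational $p$), which the paper leaves implicit.
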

\begin{proof}
    By the definition of $F$, the even-indexed bits of $S[2m:2n]$ are exactly $R[m:n]$. Furthermore, given $S[2m:2n]$, we can computationally recover
    \[R[\#_1(R[0:m]):\#_1(R[0:n])]\]
    using Algorithm~\ref{alg:StoR}.
    \begin{algorithm}
    \caption{Computing $S[2m:2n]\mapsto R[\#_1(R[0:m]):\#_1(R[0:n])]$}
    \label{alg:StoR}
    \begin{algorithmic}
    \State $w \gets \lambda$
    \For{$k = m$ \textbf{to} $n-1$}
        \If{$S[2k] = 1$}
            \State $w \gets w S[2k+1]$
        \EndIf
    \EndFor
    \State \Return $w$
    \end{algorithmic}
    \end{algorithm}
    
    By Lemma~\ref{lem:ones}, the symmetric difference
    \[[\lfloor pm\rfloor:\lfloor pn\rfloor]\:\triangle\:[\#_1(R[0:m]):\#_1(R[0:n])]\]
    has cardinality $O\big(\sqrt{n\log n}\big)$. Therefore, there is a computational process that, given $R[\#_1(R[0:m]):\#_1(R[0:n])]$ and $O\big(\sqrt{n\log n}\big)$ bits of additional information, could output $R[\lfloor pm\rfloor :\lfloor pn\rfloor]$. By combining this with Algorithm~\ref{alg:StoR}, we have
    \[K(R[A]\mid S[2m:2n])=O\big(\sqrt{n\log n}\big).\qedhere\]
\end{proof}

\begin{lemma}\label{lem:rb}
    $K(S[U]\mid R[B])=O\big(\sqrt{n\log n}\big)$.
\end{lemma}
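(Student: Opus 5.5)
We will show that, given $R[B]$ together with $O\big(\sqrt{n\log n}\big)$ bits of advice, one can compute every bit $S[j]$ with $j\in U$. Recall that $U\subseteq[0:2m]$ and, by~\eqref{eq:disjoint}, $U$ avoids $[\lfloor 2pm\rfloor,\lceil 2pn\rceil]$. We then split $U$ into even-indexed and odd-indexed positions.

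\emph{Even indices.} For even $j=2k\in U$ we have $S[2k]=R[k]$. Here $2k<2m$, and $2k\notin[\lfloor 2pm\rfloor,\lceil 2pn\rceil]$, so $k\in[0:\lfloor pm\rfloor]\cup[\lfloor pn\rfloor:m]$, up to $O(1)$ boundary indices. Thus, apart from $O(1)$ advice bits, these are bits of $R[B]$. Since $U$ and $B$ are both computable from $n$ (with $m=\lfloor\gamma n\rfloor$ and $p$ computable), the decoder knows exactly where each bit sits inside $R[B]$.

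\emph{Odd indices.} For odd $j=2k+1\in U$ we have $S[2k+1]=R[\#_1(R[0:k])]$. The index $t_k=\#_1(R[0:k])$ is within $O\big(\sqrt{n\log n}\big)$ of $pk$ by Lemma~\ref{lem:ones}. For such $k$, either $k<pm$ or $pn\le k<m$, up to $O(1)$ boundary terms. If $k<pm$, then $pk<p^2m<pm$. If $pn\le k<m$, then $pk\in[p^2n,pm)$. In both cases $pk<pm$, so $t_k$ lies in $[0:\lfloor pm\rfloor]$ except when $k$ is within $O\big(\sqrt{n\log n}\big)/p$ of $m$. Those exceptional $k$ form an interval of length $O\big(\sqrt{n\log n}\big)$, and the corresponding bits $S[2k+1]$ can simply be given as advice.

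\emph{Computing $t_k$.} For the remaining $k$, the decoder must compute $t_k$ itself. Counting the ones in $R[0:k]$ needs the bits $R[\lfloor pm\rfloor:\lfloor pn\rfloor]$, which belong to $A$ rather than $B$. However, only their \emph{count} is needed, and this is a single number of $O(\log n)$ bits, namely $\#_1(R[\lfloor pm\rfloor:\lfloor pn\rfloor])$. With that one number as advice, every prefix count $\#_1(R[0:k])$ with $k\le m$ is computable from $R[B]$. The target bit $R[t_k]$ then lies in $R[0:\lfloor pm\rfloor]\subseteq R[B]$, so it can be read off directly.

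\emph{Conclusion.} Summing the advice gives $O\big(\sqrt{n\log n}\big)$ bits for the exceptional odd positions, plus $O(\log n)$ bits for $n$ and for the ones-count of $R[A\cap[0:m]]$, plus $O(1)$ bits for the boundary terms. This yields $K(S[U]\mid R[B])=O\big(\sqrt{n\log n}\big)$.

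The main obstacle is the boundary bookkeeping for the odd positions: verifying that $t_k$ stays below $pm$ for all but $O\big(\sqrt{n\log n}\big)$ values of $k$. One must also confirm that the middle block of $A$ affects the decoding only through its single ones-count. If the deviation from Lemma~\ref{lem:ones} instead pushes indices slightly into $[\lfloor pm\rfloor:\lfloor pn\rfloor]$, those few bits can again be supplied verbatim as advice without changing the bound.
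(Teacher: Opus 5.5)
Your proof is correct and follows essentially the same route as the paper's: even-indexed positions of $U$ are read directly from $R[B]$, while odd-indexed positions copy from $R[0:\lfloor pm\rfloor]$ except for $O\big(\sqrt{n\log n}\big)$ positions near $2m$, whose bits are supplied as advice. Your single advice number $\#_1(R[\lfloor pm\rfloor:\lfloor pn\rfloor])$ spells out what the paper compresses into ``the endpoints of each interval can be specified in $O(\log n)$ bits.'' Strictly, that number recovers $\#_1(R[0:k])$ only for $k\notin(\lfloor pm\rfloor,\lfloor pn\rfloor)$, but by~\eqref{eq:disjoint} every $k$ with $2k+1\in U$ lies outside that range, so nothing is lost.
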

\begin{proof}
    For any even index $2k\in U$ we have $k<m$, and~\eqref{eq:disjoint} tells us that
    \[k\not\in [\lfloor pm\rfloor:\lfloor pn\rfloor],\]
    so $S[2k]=R[k]$ for some $k\in B$. For every odd index $2k+1\in U$, Lemma~\ref{lem:ones} tells us that
    \[\#_1(R[0:k])\leq pk+O\big(\sqrt{n\log n}\big),\]
    which is less than $pm$ except for $O\big(\sqrt{n\log n}\big)$ odd indices $2k+1<2m$. Thus, with $O\big(\sqrt{n\log n}\big)$ exceptions, $S[2k+1]=R[\ell]$ for some $\ell\in B$. The non-exceptional bits come from $2(h-1)=O(1)$ intervals of indices, and the endpoints of each interval can be specified in $O(\log n)$ bits. Therefore, with $O\big(\sqrt{n\log n}\big)$ bits of side information to describe the intervals and the exceptional bits, $S[U]$ can be computed from $R[B]$.
\end{proof}

\begin{lemma}\label{lem:fourthcond}
     $K(S[2m:2n]\mid S[U])\geq \ent(p)(1+p)(n-m)-O\big(\sqrt{n\log n}\big)$.
\end{lemma}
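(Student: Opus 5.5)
We chain symmetry of information with the two preceding technical lemmas. Lemma~\ref{lem:ra} says $S[2m:2n]$ nearly determines $R[A]$, and Lemma~\ref{lem:rb} says $R[B]$ nearly determines $S[U]$. Randomness of $R$ then makes $R[A]$ and $R[B]$ nearly independent. Throughout, all error terms $O(\log n)$ are absorbed into $O\big(\sqrt{n\log n}\big)$.

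\textbf{Step 1 (replace the target by $R[A]$).} Conditional complexity is monotone under computable maps of the target, up to the description of the map. So
\[K(S[2m:2n]\mid S[U])\geq K(R[A]\mid S[U]) - K(R[A]\mid S[2m:2n]) - O(\log n),\]
and Lemma~\ref{lem:ra} bounds the subtracted term by $O\big(\sqrt{n\log n}\big)$.

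\textbf{Step 2 (replace the condition by $R[B]$).} Since $S[U]$ is computable from $R[B]$ with $O\big(\sqrt{n\log n}\big)$ advice by Lemma~\ref{lem:rb}, conditioning on $R[B]$ gives at least as much information, up to that advice. Hence
\[K(R[A]\mid S[U])\geq K(R[A]\mid R[B])-O\big(\sqrt{n\log n}\big).\]

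\textbf{Step 3 (lower-bound $K(R[A]\mid R[B])$).} By~\eqref{eq:soi}, this equals $K(R[A],R[B])-K(R[B])$ up to $O(\log n)$.

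For the pair, $(R[A],R[B])$ computes $R[0:n]$, because $A$ and $B$ partition $[0:n]$ and the endpoints $\lfloor pm\rfloor,\lfloor pn\rfloor,m,n$ cost $O(\log n)$ bits. By~\eqref{eq:bernoullirandoment}, $K(R[0:n])\geq \ent(p)n-O\big(\sqrt{n\log n}\big)$.

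For $R[B]$, it is the concatenation of the two substrings $R[0:\lfloor pm\rfloor]$ and $R[\lfloor pn\rfloor:m]$. Subadditivity together with the upper bound in~\eqref{eq:bernoullirandoment} gives
\[K(R[B])\leq \ent(p)|B|+O\big(\sqrt{n\log n}\big).\]
Therefore
\[K(R[A]\mid R[B])\geq \ent(p)(n-|B|)-O\big(\sqrt{n\log n}\big)=\ent(p)|A|-O\big(\sqrt{n\log n}\big).\]
Finally, $|A|=(n-m)+\lfloor pn\rfloor-\lfloor pm\rfloor = (1+p)(n-m)\pm O(1)$.

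\textbf{Main obstacle.} The delicate point is Step 3. The upper bound on $K(R[B])$ needs~\eqref{eq:bernoullirandoment} applied to each substring with error only $O\big(\sqrt{|w|\log|w|}\big)\leq O\big(\sqrt{n\log n}\big)$. The lower bound must be applied to the full prefix $R[0:n]$, not to the pieces separately, so that the two entropy estimates subtract cleanly.

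One also checks that $B$ is a genuine union of intervals, i.e.\ $\lfloor pn\rfloor\leq m$. This holds because $\gamma>\gamma_0$ and $m=\lfloor\gamma n\rfloor$ with $\gamma$ near $1>p$. If necessary, I would note that Observation~\ref{obs:gap} allows choosing $\gamma>p$.

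Combining Steps 1–3 yields
\[K(S[2m:2n]\mid S[U])\geq \ent(p)(1+p)(n-m)-O\big(\sqrt{n\log n}\big).\]
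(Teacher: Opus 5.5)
Your proposal is correct and follows the paper's proof essentially step for step: Lemma~\ref{lem:ra} replaces the target by $R[A]$, Lemma~\ref{lem:rb} replaces the condition by $R[B]$, and symmetry of information then combines the lower bound on $K(R[0:n])$ with the subadditive upper bound on $K(R[B])$. Your additional check that $\lfloor pn\rfloor\leq m$, which is needed for $A$ and $B$ to genuinely partition $[0:n]$, is an assumption the paper leaves implicit.
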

\begin{proof}
    Applying basic properties of Kolmogorov complexity, we have
    \begin{align*}
        K(S[2m:2n]\mid S[U])&\geq K(R[A]\mid S[U])-O\big(\sqrt{n\log n}\big)\tag{by Lemma~\ref{lem:ra}}\\
        &\geq K(R[A]\mid R[B])-O\big(\sqrt{n\log n}\big)\tag{by Lemma~\ref{lem:rb}}\\
        &\geq K(R[0:n])-K(R[B])-O\big(\sqrt{n\log n}\big)\tag{by~\eqref{eq:soi}}\\
        &\geq \ent(p)n-K(R[B])-O\big(\sqrt{n\log n}\big).\tag{by~\eqref{eq:bernoullirandoment}}
    \end{align*}
    Since $R[B]$ is just the concatenation of $R[0:\lfloor pm\rfloor]$ and $R[\lfloor pn\rfloor:m]$, we have
    \begin{align*}
        K(R[B])&\leq K(R[0:\lfloor pm\rfloor])+K(R[\lfloor pn\rfloor:m])+O(1)\\
        &\leq \ent(p)(m-pn+pm)+O\big(\sqrt{n\log n}\big).\tag{by~\eqref{eq:bernoullirandoment}}
    \end{align*}
    Combining these inequalities yields the lemma statement.
\end{proof}

Recalling our choice of $\alpha$, Lemma~\ref{lem:fourthcond} tells us that the fourth condition of Lemma~\ref{lem:dtok} is satisfied if our choice of $n$ is sufficiently large for the $O\big(\sqrt{n\log n}\big)$ term to be at most $(n-m)(\alpha-(1-(1+p)\ent(p)/2))$. We now apply Lemma~\ref{lem:dtok} to prove the following.

\begin{lemma}\label{lem:nosuccess}
    $d_G^{(1-(\alpha+\varepsilon)/\gamma)}$ does not succeed on $S$.
\end{lemma}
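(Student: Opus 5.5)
We apply Lemma~\ref{lem:dtok} along a geometric sequence of windows and then chain the resulting bounds to control the martingale at every prefix. Work with the leading-head positions $2m$ and $2n$, where $m=\lfloor\gamma n\rfloor$. By the choice of $U$, Observation~\ref{obs:speed}, and Lemma~\ref{lem:fourthcond}, all four conditions of Lemma~\ref{lem:dtok} hold for every sufficiently large $n$. Apply the lemma with the interval $[2m:2n]$ in place of $[m:n]$. The length $n-m$ in Lemma~\ref{lem:fourthcond} should then be read as the string length $2(n-m)$, with $\alpha$ rescaled accordingly; I will check that the factor of $2$ is consistent with the choice of $\alpha$. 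This gives a constant $N$ such that for all $n\geq N$,
\[\max_{2\lfloor\gamma n\rfloor<k\leq 2n}\frac{d_G(S[0:k])}{d_G(S[0:2\lfloor\gamma n\rfloor])}\leq 2^{(\alpha+\varepsilon)(2n-2\lfloor\gamma n\rfloor)}.\]

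Next, define a chain $n_0'>n_1'>\cdots$ with $n_0'=n$ and $n_{j+1}'=\lfloor\gamma n_j'\rfloor$, stopping once $n_j'<N$. Telescoping the bound along this chain gives
\[d_G(S[0:2n])\leq C\cdot 2^{(\alpha+\varepsilon)\cdot 2n},\]
where $C$ bounds the capital at positions up to $2N$; such a $C$ exists because a finite-state martingale grows by at most a factor of $2$ per step. For an arbitrary position $k$, choose $n$ with $2\lfloor\gamma n\rfloor<k\leq 2n$, which is possible for large $k$ by taking $n=\lceil k/2\rceil$. The max-form of Lemma~\ref{lem:dtok} applied to that window, combined with the telescoped bound at $2\lfloor\gamma n\rfloor$, yields
\[d_G(S[0:k])\leq C\,2^{(\alpha+\varepsilon)2n}.\]
Since $2n\leq k+2$, this is at most $C'\,2^{(\alpha+\varepsilon)k}$. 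I therefore expect even the exponent $s=1-(\alpha+\varepsilon)$ to fail, and the weaker exponent with the factor $1/\gamma$ should follow a fortiori.

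The $1/\gamma$ in the statement suggests the authors intend a cruder bookkeeping. Their approach presumably applies the lemma only once, bounding the growth within $[2m:k]$ by $2^{(\alpha+\varepsilon)2n}$ while using $2n\leq k/\gamma$ at the start of a window. I will follow whichever accounting is cleaner. For the $s$-gale with $s=1-(\alpha+\varepsilon)/\gamma$, the bound gives
\[d_G^{(s)}(S[0:k])=2^{(s-1)k}d_G(S[0:k])\leq C'\,2^{-(\alpha+\varepsilon)k/\gamma+(\alpha+\varepsilon)k}\leq C'.\]
This is bounded, so $\limsup_k d_G^{(s)}(S[0:k])<\infty$, and $d_G^{(s)}$ does not succeed.

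The main obstacle is the telescoping step. The hypotheses of Lemma~\ref{lem:dtok}, in particular that $n-m$ is sufficiently large and that the $O(\sqrt{n\log n})$ slack in Lemma~\ref{lem:fourthcond} is absorbed, hold only for $n$ beyond some threshold. The chain must therefore stop at a finite threshold whose contribution is a constant. One also has to check that the windows $[2\lfloor\gamma n_{j}'\rfloor:2n_j']$ exactly abut, which the recursive definition of $n_{j+1}'$ guarantees.
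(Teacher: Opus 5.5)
Your argument is correct and is essentially the paper's: both apply Lemma~\ref{lem:dtok} to the abutting windows $[2\lfloor\gamma n\rfloor:2n]$ and telescope. No rescaling of $\alpha$ is needed, since $\alpha>1-(1+p)\ent(p)/2$ is exactly what gives $(1-\alpha)\cdot 2(n-m)<(1+p)\ent(p)(n-m)$. The only difference is bookkeeping: the paper runs one forward chain $m_{i+1}=\lceil m_i/\gamma\rceil$ and charges every position its whole window's growth, which is where the $1/\gamma$ comes from, whereas your backward chain from $n=\lceil k/2\rceil$ puts $k$ at the end of its window, so you get non-success already at exponent $1-(\alpha+\varepsilon)$ and the stated exponent follows by monotonicity.
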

\begin{proof}
    Let $\varepsilon\in(0,1)\cap\Q$, $\alpha$, and $\gamma$ be as above, and let $m_0$ be sufficiently large for Lemma~\ref{lem:dtok} to apply with $m=m_0$ and $n=\lceil m_0/\gamma\rceil$. For each $i\in\N$, let $m_{i+1}=\lceil m_i/\gamma\rceil$. Then by Lemma~\ref{lem:dtok}, for each $i\in\N$ we have
    \begin{equation}\label{eq:ratio}
        \max_{2m_i<k\leq 2m_{i+1}}\frac{d_G(S[0:k])}{d_G(S[0:2m_i])}\leq 2^{2(\alpha+\varepsilon)(m_{i+1}-m_i)}.
    \end{equation}
    For any $j\in\N$, let $n\in\N$ such that $2m_j\leq n\leq 2m_{j+1}$. Then,
    \begin{align*}
        d_G(S[0:n])&=d_G(S[0:2m_0])\left(\prod_{i=0}^{j-1}\frac{d_G(S[0:2m_{i+1}])}{d_G(S[0:2m_i])}\right)\frac{d_G(S[0:n])}{d_G(S[0:2m_j])}\\
        &=d_G(S[0:2m_0])\cdot 2^{2(\alpha+\varepsilon)m_{j+1}}\tag{by~\eqref{eq:ratio}}\\
        &\leq d_G(S[0:2m_0])\cdot 2^{(\alpha+\varepsilon)n/\gamma+2}.
    \end{align*}
    Therefore,
    \begin{align*}
        \limsup_{n\to\infty}d_G^{(1-(\alpha+\varepsilon)/\gamma)}(S[0:n])&=\limsup_{n\to\infty}2^{-(\alpha+\varepsilon)n/\gamma}d_G(S[0:n])\\
        &=\limsup_{n\to\infty} d_G(S[0:2m_0]),
    \end{align*}
    which is finite. That is, $d_G^{(1-(\alpha+\varepsilon)/\gamma)}$ does not succeed on $S$.
\end{proof}
    
We can choose $\varepsilon$ arbitrarily close to 0, $\alpha$ arbitrarily close to $1-(1+p)\ent(p)/2$, and $\gamma$ arbitrarily close to $1$, making $1-(\alpha+\varepsilon)/\gamma$ arbitrarily close to $(1+p)\ent(p)/2$. Although we only made this argument for rational values of $\alpha$, $\gamma$, and $\varepsilon$, success on a sequence is a monotone property in $s$, in that if $s'<s$ and $d_G^{(s')}$ succeeds on $S$, then $d_G^{(s)}$ also succeeds on $S$; this is immediate from the definition of success. Thus,
\[\inf\left\{s:d_G^{(s)}\text{ succeeds on }S\right\}\geq(1+p)\frac{\ent(p)}{2}.\]
As $G$ was an arbitrary oblivious multi-head finite-state gambler, we conclude that
\[\odimFS{\mh}(S)\geq (1+p)\frac{\ent(p)}{2}.\qedhere\]
    
\section{Main Theorem}\label{sec:main}

Given Theorems~\ref{thm:ub} and~\ref{thm:lb}, proving our separation theorem is simply a matter of choosing an appropriate bias $p$ for the underlying sequence $R$.
\begin{theorem}\label{thm:main}
    There is a sequence $S\in\{0,1\}^\omega$ such that
    \[\aDimFS{(2)}(S)<\odimFS{\mh}(S)-0.3.\]
\end{theorem}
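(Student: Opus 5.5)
The plan is to instantiate Theorems~\ref{thm:ub} and~\ref{thm:lb} with a single well-chosen bias $p$. Take $R$ to be $\mu_p$-Martin-L\"of random and set $S=F(R)$. Theorem~\ref{thm:ub} gives $\aDimFS{(2)}(S)\leq \ent(p)/2$. Theorem~\ref{thm:lb} requires $p\in(0,1)\setminus\Q$ computable, and then gives $\odimFS{\mh}(S)\geq (1+p)\ent(p)/2$. Subtracting, it suffices that
\[\frac{(1+p)\ent(p)}{2}-\frac{\ent(p)}{2}=\frac{p\,\ent(p)}{2}>0.3.\]

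I will take $p=1/\sqrt{2}$, which is irrational and computable. The convenient feature of this choice is that $\log(1/p)=1/2$ exactly, so $p\log(1/p)=\frac{1}{2\sqrt2}\approx 0.35355$. For the other term, $1-p\approx 0.29289$ and $\log\frac{1}{1-p}=\log(2+\sqrt2)\approx 1.7716$, so $(1-p)\log\frac1{1-p}\approx 0.5189$. Hence $\ent(p)\approx 0.8724$ and $p\,\ent(p)/2\approx 0.3084>0.3$.

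To make this rigorous, I only need crude rational bounds. I will use $\log(2+\sqrt2)>\log 3.41>1.77$, which gives $\ent(p)>0.3535+0.2928\cdot 1.77>0.871$. Then $p\,\ent(p)/2>0.7071\cdot0.871/2>0.3079$. (As a sanity check, $p=0.7$ also works numerically, but it is rational and so is excluded by Theorem~\ref{thm:lb}; the irrational choice $1/\sqrt2$ is what is needed.)

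Two remaining points need checking. First, a $\mu_p$-Martin-L\"of random sequence $R$ exists: for computable $p$, the $\mu_p$-random sequences have $\mu_p$-measure one, so in particular the set is nonempty. Second, the strict inequality follows by chaining
\[\aDimFS{(2)}(S)\leq\frac{\ent(p)}{2}<\frac{(1+p)\ent(p)}{2}-0.3\leq\odimFS{\mh}(S)-0.3.\]

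The only genuine obstacle is the numerical verification that $p\,\ent(p)>0.6$ with a provable margin. The maximum of $p\,\ent(p)$ is only about $0.617$ near $p\approx0.7$, so the margin is thin, and the estimate of $\log(2+\sqrt2)$ must be carried to about two decimal places.
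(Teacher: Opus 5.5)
Your proposal is correct and matches the paper's proof: take $R$ to be $\mu_{\sqrt2/2}$-Martin-L\"of random, set $S=F(R)$, and combine Theorems~\ref{thm:ub} and~\ref{thm:lb} to obtain a gap of $p\,\ent(p)/2>0.30845$ at $p=\sqrt2/2$. One small arithmetic slip: $\log 3.41\approx 1.7698<1.77$ (since $2^{1.77}\approx 3.4105$), so the middle step of your chain is false. The endpoint $\log(2+\sqrt2)\approx 1.7716>1.77$ is nonetheless true, and the cruder bound $\log(2+\sqrt2)>\log 3.4>1.76$ already gives $p\,\ent(p)/2>0.307$.
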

\begin{proof}
    Let $R\in\{0,1\}^\omega$ be Martin-L\"of random with respect to the $\frac{\sqrt{2}}{2}$-biased Bernoulli measure $\mu_{\sqrt{2}/2}$, and let $S=F(R)$. Since $\frac{\sqrt{2}}{2}$ is computable, Theorem~\ref{thm:ub} applies, and since it is irrational, Theorem~\ref{thm:lb} applies. Therefore,
    \[\odimFS{\mh}(S)-\aDimFS{(2)}(S)\geq \frac{\sqrt{2}}{2}\frac{\ent(\sqrt{2}/2)}{2}> 0.30845.\qedhere\]
\end{proof}

While $p=\frac{\sqrt{2}}{2}$ has the advantage of simplicity, the maximum separation from Theorems~\ref{thm:ub} and~\ref{thm:lb}, namely, $\max_{p\in[0,1]}p\frac{\ent(p)}{2}$,
is slightly greater. It is open whether this is the maximum possible separation between adaptive two-head finite-state strong predimension and oblivious multi-head finite-state dimension.

Building on the hierarchy theorem of Huang et al.~\cite{mhfsd}, Cruz et al.~\cite{CGLL26} proved that for each $h\geq 2$ there is a sequence $Y\in\{0,1\}^\omega$ such that
\[\adimFS{(h-1)}(Y)\geq \oDimFS{(h)}(Y)+\frac{1}{p_h},\]
where $p_h$ is the $h$\textsuperscript{th} prime. Choosing $h=3$ and recalling~\eqref{eq:comparevariants}, this implies there is a sequence $Y\in\{0,1\}^\omega$ with
\[\aDimFS{(2)}(Y)\geq\odimFS{\mh}(Y)+0.2,\]
in sharp contrast to the sequence $S$ in Theorem~\ref{thm:main}. This points to a complex interplay between data access and sequence structure for multi-head gamblers, which merits further investigation.

\bibliographystyle{plain}
\bibliography{oath}

\end{document}